\def\ps@headings{%
\def\@oddhead{}%
\def\@evenhead{}%
\def\@oddfoot{\mbox{}\scriptsize\rightmark \hfil \thepage}%
\def\@evenfoot{\scriptsize\thepage \hfil \leftmark\mbox{}}}
\newtheorem{corollary}{\bf Corollary}
\newtheorem{theorem}{\bf Theorem}
\newtheorem{definition}{\bf Definition}
\newtheorem{remark}{Remark}
\newlength{\aligntop}
\newlength{\alignbot}
\renewenvironment{align}{%
  \vspace{\aligntop}
  \start@align\@ne\st@rredfalse\m@ne
}{%
  \math@cr \black@\totwidth@
  \egroup
  \ifingather@
    \restorealignstate@
    \egroup
    \nonumber
    \ifnum0=`{\fi\iffalse}\fi
  \else
    $$%
  \fi
  \ignorespacesafterend%
  \vspace{\alignbot}\par\noindent
}
\begin{document}
\title{Hardware Trojan Detection Game: A Prospect-Theoretic Approach}
\author{\IEEEauthorblockN{ Walid Saad$^1$, Anibal Sanjab$^1$, Yunpeng Wang$^2$, Charles Kamhoua$^3$, and Kevin Kwiat$^3$} \IEEEauthorblockA{\small
$^1$ Wireless@VT, Bradley Department of Electrical and Computer Engineering,  Virginia Tech, Blacksburg, VA, USA, \\ Emails: \url{{walids,anibals}@vt.edu}\\
$^2$ Electrical and Computer Engineering Department, University of Miami, Coral Gables, FL, USA, Email: \url{y.wang68@umiami.edu}\\
$^3$ Air Force Research Laboratory, Information Directorate, Cyber Assurance Branch, Rome, NY \\ Emails: \url{{charles.kamhoua.1,kevin.kwiat}@us.af.mil}\vspace{-0.8cm}
 }%
  \thanks{This research is supported by the  U.S. National Science Foundation under Grants CNS-1253731 and CNS-1406947 and by the Air Force Office of Scientific Research (AFOSR).}}
\date{}
\maketitle

\begin{abstract}
Outsourcing integrated circuit (IC) manufacturing to offshore foundries has grown exponentially in recent years. Given the critical role of ICs in the control and operation of vehicular systems and other modern engineering designs, such offshore outsourcing has led to serious security threats due to the potential of insertion of hardware trojans -- malicious designs that, when activated, can lead to highly detrimental consequences. In this paper, a novel game-theoretic framework is proposed to analyze the interactions between a hardware manufacturer, acting as attacker, and an IC testing facility, acting as defender. The problem is formulated as a noncooperative game in which the attacker must decide on the type of trojan that it inserts while taking into account the detection penalty as well as the damage caused by the trojan. Meanwhile, the resource-constrained defender must decide on the best testing strategy that allows optimizing its overall utility which accounts for both damages and the fines. The proposed game is based on the robust \emph{behavioral framework of prospect theory (PT)} which allows capturing the potential uncertainty, risk, and irrational behavior in the decision making of both the attacker and defender. For both, the standard rational expected utility (EUT) case and the PT case, a novel algorithm based on fictitious play is proposed and shown to converge to a mixed-strategy Nash equilibrium. For an illustrative case study, thorough analytical results are derived for both EUT and PT to study the properties of the reached equilibrium as well as the impact of key system parameters such as the defender-set fine. Simulation results assess the performance of the proposed framework under both EUT and PT and show that the use of PT will provide invaluable insights on the outcomes of the proposed hardware trojan game, in particular, and system security, in general.
\end{abstract}

\section{Introduction}
The past decade has witnessed unprecedented advances in the fabrication and design of integrated circuits (ICs). Indeed, ICs have become an integral component in many engineering domains ranging from transportation systems and critical infrastructures to robotics, communication, and vehicular systems~\cite{SURV00}. For instance, the vast advancements in vehicular systems designs have led to wide developments of vehicular electronics technologies and the proliferation of the integration of ICs in vehicular systems. These massive advances in IC design have also had many production implication. In particular, the flexibility of modern IC design coupled with its ease of manufacturing have led to the outsourcing of IC fabrication~\cite{HP00}. Such outsourcing allows a cost-effective production of the IC circuitry of many systems and critical infrastructures~\cite{SURV01,HP00,SURV02}. Moreover, the recent interest in the use of commercial off-the-shelf devices in both civilian and military systems has also constituted yet another motivation for outsourcing IC fabrication~\cite{COTS000}.

Relying on offshore foundries for IC manufacturing is a cost-effective way for mass production of microcircuits. However, such an outsourcing can lead to serious security threats. These threats are exacerbated when the ICs in question are deployed into critical applications such as vehicular systems, communication systems, power networks, transportation systems, or military applications. One such threat is that of the \emph{hardware trojan} insertion by IC manufacturers~\cite{SURV02,HT04,HT00,HT01,HT02,HT03}. A hardware trojan is a malicious design that can be introduced into an IC at manufacturing. The trojan lies inactive until it is activated by certain pre-set conditions when the IC is in use. Once activated, the trojan can lead to a circuit error which, in turn, can lead to detrimental consequences to the system in which the IC is used. The threat of serious malicious IC alterations via hardware trojans has become a major concern to governmental and private agencies, as well as to the military, transportation, and energy sectors~\cite{SURV00,SURV01,SURV02,HP00,HT00,HT01,HT02,HT03,HT04}. For instance, vehicular technologies are known to be one of the main potential targets for hardware trojans~\cite{SurveyHardwareTrojan,RiseofHardwareTroj,ReducingEncryptionOverhead,HuntSwitch}. Indeed, due to their significant reliance on microcontrollers, digital-signal processors, microprocessors, commercial-off-the-shelf parts, and integrated circuits which can come from a vast range of suppliers, vehicular systems can be a prime target to electronic manipulation attacks and the insertion of hardware trojans. For example, in a recent study of auto industry trends~\cite{PWC}, it was observed that electronic systems contribute to $90\%$ of automobile innovations and new features. In addition, new airborne systems and military fighters contain hundreds and thousands of chips~\cite{HuntSwitch} with a large number of suppliers, spread around the world, making them a vulnerable target to potential hardware trojan insertion.

Defending against hardware trojans and detecting their presence face many challenges that range from circuit testing and design to economic and contractual issues~\cite{SURV00,SURV01,SURV02,HP00,HT00,HT01,HT02,HT03,HT04,HT05,HT06,HT07}. The majority of these works~\cite{HT00,HT01,HT02,HT03,HT04,HT05,HT06,HT07} focuses on IC and hardware-level testing procedures used to activate or detect hardware trojans. This literature also highlights a key limitation in testing for hardware trojans: there exists a resource limitation that prevents testing for all possible types of hardware trojans within a given circuit. While interesting, most of these existing works do not take into account the possible strategic interactions that can occur between the two entities involved in hardware trojan detection: the manufacturer of the IC and the recipient, such as the governmental agencies or companies that are buying the ICs. Indeed, on the one hand, the manufacturer (viewed as an attacker) can strategically decide on which type of trojan to insert while taking into account possible testing strategies of the IC recipient. On the other hand, the agency (viewed as a defender), must decide on which testing process to use and for which trojans to test, given the possible trojan types that a manufacturer can introduce. This motivates the need for a mathematical framework that allows a better understanding of these strategic interactions between the two entities and their strategic behavior in order to anticipate the outcome of such interaction.

To this end, recently, a number of research works~\cite{TrustGamesTrojan2,TrustGamesTrojan,TrojanGameConf,TrojanGameJrnl} have focused on modeling the strategic interaction between a manufacturer and an agency (or client) in a hardware trojan insertion/detection setting using game theory. In particular, the works in~\cite{TrustGamesTrojan2} and~\cite{TrustGamesTrojan} propose a game-theoretic method to test the effectiveness of hardware trojan detection techniques. These works have pinpointed the advantages of using game theory for the development of better hardware trojan detection strategies. In addition, the authors in~\cite{TrojanGameConf} and~\cite{TrojanGameJrnl} studied a zero-sum game between a hardware trojan attacker and defender aiming at characterizing the best detection strategy that the defender can employ to face a strategic attacker which can insert one of many types of trojans. Despite being interesting, these works assume that the involved players always act with full rationality. However, as has been experimentally tested in~\cite{kahneman1979prospect} and~\cite{tversky1992advances}, when faced with risks and uncertainty (as in the case of security situations such as hardware trojan detection scenarios) humans tend to act in a subjective and sometimes irrational manner. The works in~\cite{TrojanGameConf,TrojanGameJrnl,TrustGamesTrojan2,TrustGamesTrojan} do not take into account this subjectivity which would significantly impact the game-theoretic results and equivalently affect the optimal attack and defense strategies of the involved entities. As such, a fundamentally new approach is needed that incorporates this possible subjective behavior in the game-theoretic formulation in order to quantify and assess the impact of such subjectivity on the attacker's and defender's strategies as well as on the hardware trojan detection game's outcome.  

The main contribution of this paper is to propose a novel, game-theoretic framework to understand how the attacker and defender can interact in a hardware trojan detection game. We formulate the problem as a noncooperative zero-sum game in which the defender must select the trojan types for which it wishes to test while the attacker must select a certain trojan type to insert into the IC. In this game, the attacker aims to maximize the damage that it inflicts on the defender via the trojan-infected IC while the defender attempts to detect the trojan and, subsequently, impose a penalty that would limit the incentive of the attacker to insert a trojan. One key feature of the proposed game is that it allows, based on the emerging framework of prospect theory (PT), capturing the subjective behavior of the attacker and defender when choosing their strategies under uncertainty and risk that accompany the hardware trojan detection decision making processes. This uncertainty and risk stem from the lack of information that the attacker and defender have on one another as well as from the tragic consequences on the attacker and defender that are associated, respectively, with a successful or unsuccessful detection of the trojan. Moreover, such a subjective behavior can originate from the personality traits of the humans involved (e.g, system administrators at the defense side and hackers at the manufacturer's side) which guide their tendency of being risk seeking or risk averse. Using PT enables studying how the attacker and defender can make their decisions based on subjective perceptions on each others' possible strategies and the accompanying gains and losses. To our best knowledge, this is the first paper that applies tools from PT to better understand the outcomes of such a security game. 
Indeed, although game theory has been a popular tool for network security (see survey in \cite{QZ00}), most existing works are focused on games in which all players are rational (one notable exception is in~\cite{MT00} which, however, focuses on resource allocation and does not address hardware trojan detection). Moreover, beyond some recent works on using PT for wireless networking~\cite{PTM00} and smart grid~\cite{WSM00}, no work seems to have investigated how PT can impact system security, in general, and trojan detection, in particular. To solve the game under both standards, rational expected utility theory (EUT) and PT, we propose an algorithm based on fictitious play that is shown to converge to a mixed-strategy Nash equilibrium of the game. Then, for an illustrative numerical case study, we derive several analytical results on the equilibrium properties and the impact of the fine (i.e. penalty) on the overall outcome of the game. Simulation results show that PT provides insightful results on how uncertainty and risk can impact the overall outcome of a security game, in general, and a hardware trojan detection game in particular. The results show that deviations from rational EUT decision making can lead to unexpected outcomes for the game. Therefore, these results will provide guidelines for system designers to better understand how to counter hardware trojans and malicious manufacturers.

The rest of this paper is organized as follows: Section~\ref{sec:prob} presents
the system model and the formulation of a noncooperative game for hardware trojan detection. In Section~\ref{sec:pt}, we present a novel trojan detection framework based on PT while in Section~\ref{sec:algo} we devise an algorithm for solving the game. Analytical and simulation results are presented and analyzed in Section~\ref{sec:num} while conclusions are drawn in
Section \ref{sec:conc}.

\section{System Model and Game Formulation}\label{sec:prob}
\subsection{System Model}
Consider an IC manufacturer who produces ICs for different governmental agencies or companies. This manufacturer, hereinafter referred to as an ``attacker'', has an incentive to introduce hardware trojans to maliciously impact the cyber-infrastructure that adopts the produced IC. Such a trojan, when activated, can lead to errors in the circuit, potentially damaging the underlying system. Here, we assume that the attacker can insert one trojan $t$ from a set $\mathcal{T}$ of $T$ trojan types. Each trojan $t \in \mathcal{T}$ can lead to a certain damage captured by a positive real-number $V_t > 0$. 

Once the agency or company, hereinafter referred to as the ``defender'', receives the ICs, it can decide to test for one or more types of trojans. Due to the complexity of modern IC designs, it is challenging to develop test patterns that can be used to readily and quickly verify the validity of a circuit with respect to all possible trojan types. Particularly, the defender must spend ample resources if it chooses to test for all possible types of trojans. Such resources may be extremely costly. Thus, we assume that the defender can only choose a certain subset $\mathcal{A} \subset \mathcal{T}$ of trojan types for which to test, where the total number of trojans tested for is $|\mathcal{A}| < T$. The practical aspects for testing and verification of the circuit versus the subset of trojans $\mathcal{A}$ can follow existing approaches such as the scan chain approach developed in \cite{CR00}. We assume that such testing techniques are reliable and, thus, if the defender tests for the accurate type of trojan, this trojan can then be properly detected.

Here, if the defender tests for the right types of trojans that have been inserted in the circuit, then, the attacker will be penalized. This penalty is mathematically expressed by a fine $F_t$ if the trojan detected is of type $t$. The magnitude and severity of this penalty depends on the seriousness of the threat. Thus, this fine is a mathematical representation of the legal consequences of the detection of the induced threat on the manufacturing company and the involved personnel including the termination of the contract (highly damaging the reputation of the manufacturer) between the two parties as well as monetary penalties that the manufacturer is required to pay for the defender. 

Our key goal is to understand the interactions between the defender and attacker in such a hardware trojan detection scenario. In particular, it is of interest to devise an approach using which one can understand how the defender and attacker can decide on the types of trojans that they will test for or insert, respectively, and how those actions impact the overall damage on the system.  
Such an approach will provide insights on the optimal testing choices for the defender, given various possible actions that could be taken by the attacker.

\subsection{Noncooperative Game Formulation}
For the studied hardware trojan detection model, the decision of the defender regarding for which trojans to test is impacted by its perception of the potential decisions of an attacker regarding which type of trojan to insert and vice versa. Moreover, the choices by both attacker and defender will naturally determine whether any damages will be done to the system or whether any penalty must be imposed. Due to this coupling in the actions and objectives of the attacker and defender, the framework of noncooperative game theory~\cite{TB00} provides suitable analytical tools for modeling, analyzing, and understanding the decision making processes involved in the studied attacker-defender hardware trojan detection scenario.

To this end, we formulate a static \emph{zero-sum noncooperative game} in strategic form  $\Xi=\{\mathcal{N}, \{\mathcal{S}_i\}_{i\in\mathcal{N}}, \{u_i\}_{i\in\mathcal{N}}\}$ which is defined by its three main components: (i) the \emph{players} which are the attacker $a$ and the defender $d$ in the set\footnote{This two-player game formulation captures practical cases in which one system operator defends its system against trojan insertion while the system is considered not to be extremely vulnerable, in the sense that, most of the manufacturers are trusted while very few (in our case a single manufacturer) are malicious. Our generated results and proposed techniques can also form the basis for future works focusing on applications in which the existence of multiple attackers or multiple defenders represents a more practical case.} $\mathcal{N} := \{a,d\}$, (ii) the \emph{strategy} space $\mathcal{S}_i$ of each player $i \in \mathcal{N}$, and (iii) the \emph{utility function} $u_i$ of any player $i \in \mathcal{N}$.

For the attacker, the strategy space is simply the set of possible trojan types, i.e., $\mathcal{S}_a=\mathcal{T}$. Thus, an attacker can choose one type of trojans to insert in the circuit being designed or manufactured. For the defender, given the possibly large number of trojans that must be tested for, we assume that the defender can only choose to test for $K$ trojan types simultaneously. The actual value of $K$ would be determined exogeneously to the game via factors such as the resources available for the defender and the type of circuitry being tested. For a given $K$, the strategy space $\mathcal{S}_d$ of the defender will then be the set of possible subsets of $\mathcal{T}$ of size $K$. Therefore, each defender will have to choose one of such subsets, denoted by ${s_d} \in \mathcal{S}_d$.

For each defender's choice of a size-$K$ trojans set $s_d \in \mathcal{S}_d$ for which to test and attacker's choice of trojan type $s_a \in \mathcal{S}_a$ to be inserted, the defender's utility function $u_d({s_d},s_a)$ will be:
\begin{equation}
u_d(s_d,s_a)  = \begin{cases}
F_{s_a}  &\textrm{ if } s_a \in s_d,\\
-V_{s_a}&\textrm{ otherwise,}
\end{cases}
\end{equation}
where $V_{s_a}$ is the damage\footnote{$V_{s_a}$ is a mathematical quantization of the volume of the damage that trojan $s_a$ inflicts on the system when activated. Such a quantization requires accurate modeling of the underlying system and the interconnection between its various components. The incorporation of the system model in the problem formulation can be treated in a future work.} done by trojan $s_a$ if it goes undetected. Given the zero-sum nature of the game, the utility of the attacker is simply $u_a(s_d,s_a) = - u_d(s_d,s_a)$.

\section{Prospect Theory for Hardware Trojan Detection: Uncertainty and Risk in Decision Making}\label{sec:pt}
\subsection{Mixed Strategies and Expected Utility Theory}
For the studied hardware trojan detection game,  it is reasonable to assume that both defender and attacker make probabilistic choices over their strategies; and therefore, we are interested in studying the game under \emph{mixed strategies}~\cite{TB00} rather than under \emph{pure, deterministic strategies}. The rationale for such mixed probabilistic choices is two-fold: a) both attacker and defender must randomize between their strategies so as not to make it trivial for the opponent to guess their strategy and b) the hardware trojan detection game can be repeated over an infinite horizon; and therefore, mixed strategies allow capturing the frequencies with which the attacker or defender would use a certain strategy.

To this end, let $\boldsymbol{p}=[\boldsymbol{p}_d\ \boldsymbol{p}_a]$ be the vector of mixed strategies of both players where, for the defender, each element in $\boldsymbol{p}_d$ is the probability with which the defender chooses a certain size-$K$ subset $s_d \in \mathcal{S}_d$ of trojans for which to test; and for the attacker, each element in $\boldsymbol{p}_a$ represents the probability with which the attacker chooses to insert a trojan $s_a \in \mathcal{S}_a$.

In traditional game theory~\cite{TB00}, it is assumed that players act rationally. This rational assumption implies that each player, attacker or defender, will objectively choose its mixed strategy vector so as to optimize its expected utility. Indeed, under conventional expected utility theory (EUT), the utility of each player is simply the expected value over its mixed strategies which, for any of the two players $i\in \mathcal{N}$, is given by:\\
\begin{align} \label{eq:multiplayerET}
\begin{split}
&U_i^{\text{EUT}}( \boldsymbol{p}_d,\boldsymbol{p}_a)=\sum_{\boldsymbol{s} \in \mathcal{S}}\bigg(p_d(s_d)p_a(s_a)\bigg) u_i(\boldsymbol{s}),
\end{split}
\end{align}
where $\boldsymbol{s}=[s_d\ s_a]$ is a vector of selected pure strategies and $\mathcal{S}=\mathcal{S}_d \times \mathcal{S}_a$.

\subsection{Prospect Theory for the Hardware Trojan Detection Game}\label{subsec:pt}\vspace{-0.05cm}
In conventional game theory, EUT allows the players to evaluate an objective expected utility such as in (\ref{eq:multiplayerET}) in which they are assumed to act rationally and to objectively assess their outcomes. However, in real-world experiments, it has been observed that users' behavior can deviate considerably from the rational behavior predicted by EUT. The reasons for these deviations are often attributed to the risk and uncertainty that players often face when making decisions over game-theoretic outcomes.

In particular, several empirical studies~\cite{kahneman1979prospect,prelec1998probability,PT01,PT02,PT03,PT04} have demonstrated that when faced with decisions that involve gains and losses under risks and uncertainty, such as in the proposed hardware trojan detection game, players can have a subjective evaluation of their utilities. In the studied game, both the attacker and defender face several uncertainties. In fact, the defender can never be sure of which type of trojans the attacker will be inserting; and thus, when evaluating its outcomes using (\ref{eq:multiplayerET}), it may overweight or underweight the mixed-strategy vector of the attacker $\boldsymbol{p}_a$. Similarly, the attacker may also evaluate its utility given a distorted and uncertain view of the defender's possible strategies. In addition, the decisions of both attacker and defender involve humans (e.g., administrators at the governmental agency or hackers at the manufacturer) who might guide the way in which trojans are inserted or tested for. This human dimension will naturally lead to potentially irrational behavior that can be risk averse or risk seeking; thus, deviating from the rational tenets of classical game theory and EUT.

For the proposed game, such considerations of risk and uncertainty in decision making can translate into the fact that each player $i$ must decide on its action, in the face of the uncertainty induced by the mixed strategies of its opponent, which impacts directly the utility as in (\ref{eq:multiplayerET}). In order to capture such risk and uncertainty factors in the proposed hardware detection game, we turn to the \emph{emerging framework of prospect theory (PT)} \cite{kahneman1979prospect}.

One important notion from PT that is useful for the proposed hardware trojan detection game is the so-called \emph{weighting effect} on the game's outcomes. For instance, PT studies~\cite{kahneman1979prospect,prelec1998probability,PT01,PT02,PT03,PT04} have demonstrated that, in real-life, players of a certain adversarial or competitive game tend to introduce subjective weighting of outcomes that are subject to uncertainty or risk. For the hardware trojan detection game, we use the weighting effect as a way to measure how each player can view a distorted or subjective evaluation of the mixed strategy of its opponents. This subjective evaluation represents the limits on the rationality of the defender and attacker under the uncertainty and lack of exact knowledge of the possible actions of the adversary.

Thus, under PT considerations, for a player $i \in \mathcal{N}$, instead of objectively perceiving the mixed strategy ${{p}_{j}}$ chosen by the adversary, each player views a weighted or distorted version of it, $w_i({p_{j}})$, which is a nonlinear transformation that maps an objective probability to a subjective one. The exact way in which this transformation is defined is based on recent empirical studies in~\cite{kahneman1979prospect,prelec1998probability,PT01,PT02,PT03,PT04}, which show that players, in real-life decision making, tend to underweight high probability outcomes and overweight low probability outcomes\cite{kahneman1979prospect}. For our analysis, for each player $i$, we choose the widely used Prelec function which can capture the previously mentioned weighting effect as follows~\cite{prelec1998probability} (for a given probability $p_i$):\\ \vspace{-0.2cm}
\begin{align} \label{eq:weight}
w_i(p_i)=\exp(-(-\ln p_i)^{\alpha_i}),\ 0<\alpha_i \le 1,
\end{align}
where $\alpha_i$ will be referred to as the \emph{rationality parameter} which allows to express the distortion between player $i$'s subjective and objective probability perception. This parameter allows characterizing how rational the attacker or defender is by measuring how much the uncertainty and risk that this player faces distort its view of the opponents' probability. Note that when $\alpha_i=1$, this is reduced to the conventional EUT probability with full rationality. An illustration on the impact of $\alpha_i$ is shown in Fig.~\ref{fig:alpha}.

The Prelec function has been widely used to model the weighting effect of PT due to its mathematical properties which allow it to fit various experimental observations~\cite{prelec1998probability,kahneman1979prospect}. These properties include: 1) the Prelec function, $w(p)$, is regressive indicating that at the start of the range of definition of $p$, $w(p)>p$, but then afterwards, $w(p)<p$ , 2) the Prelec function has an S-shape which captures the fact that it is first concave then convex, and 3) $w(p)$ is asymmetric with fixed point and inflection point at $p=1/e\approx0.37$. This has made the Prelec function widely used in PT models such as the case in~\cite{PTM00,WSM00}, and~\cite{WalidProspect}, among others. Here, we note that a number of alternative weighting functions have also been derived in literature and are discussed thoroughly in~\cite{OntheShapeofWeighting}. To derive other functions, real-world experiments with real human subjects are needed. However, in general, our proposed framework can accommodate any weighting function.
\begin{figure}[!t]
\begin{center}
\includegraphics[width=8cm]{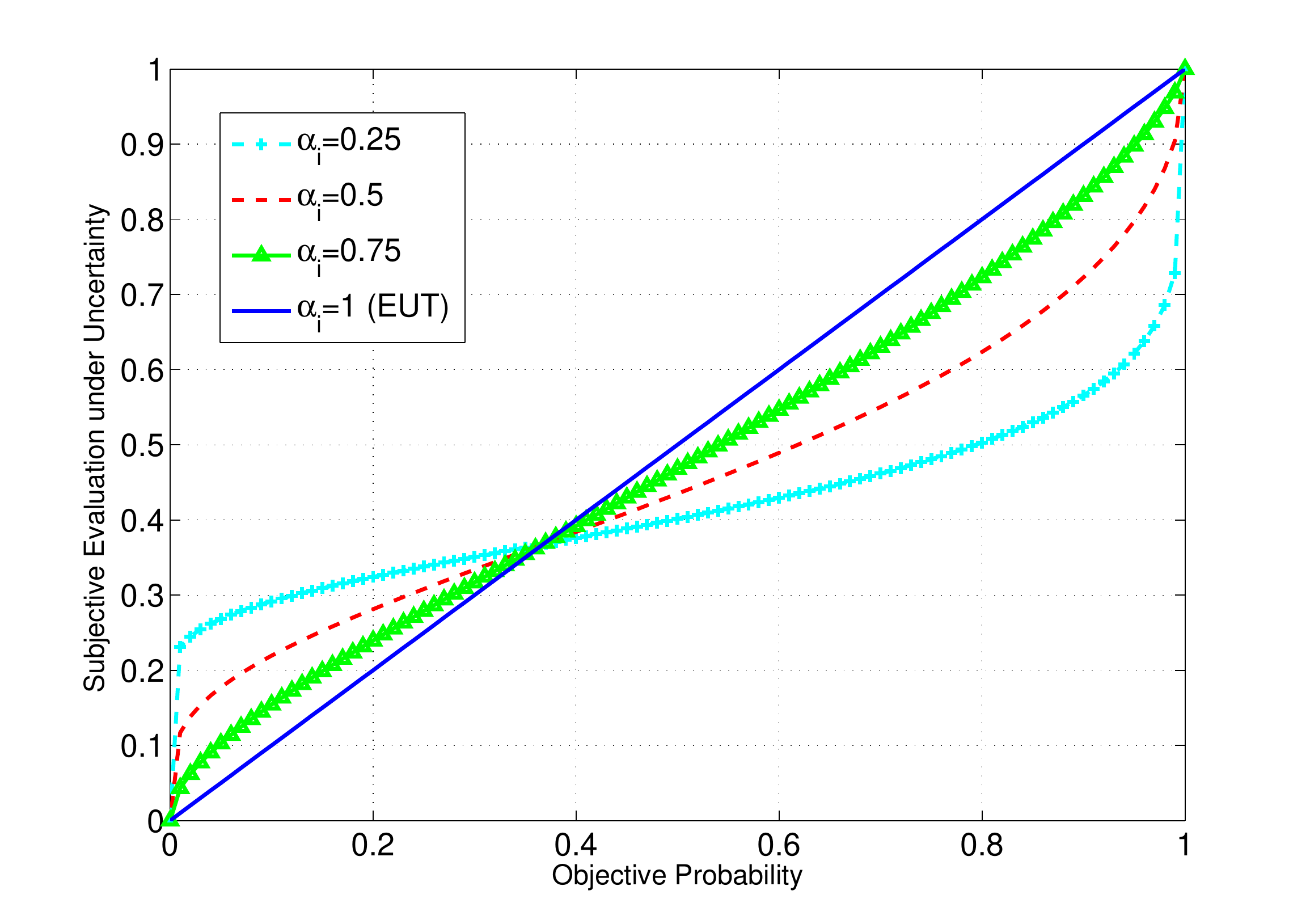}
\end{center}\vspace{-0.4cm}
\caption {Illustration of the impact of the rationality parameter $\alpha_i$.} \label{fig:alpha}\vspace{-0.6cm}
\end{figure}

Given these PT-based uncertainty and risk considerations, the expected utility achieved by a player $i$ will thus be:\vspace{-0.15cm}

\begin{align} \label{eq:multiplayerPT}
\begin{split}
&U_i^{\text{PT}}( \boldsymbol{p}_i,\boldsymbol{p}_j) = \sum_{\boldsymbol{s} \in \mathcal{S}}\bigg(p_i(s_i) w_{i}(p_{j}(s_{j}))\bigg) u_i(s_i, s_{j}),
\end{split}
\end{align}\vspace{-0.3cm}\\
where $i$ and $j$ correspond, respectively, to the defender and attacker, and vice versa. 
Clearly, in (\ref{eq:multiplayerPT}), the uncertainty is captured via each player's weighting of its opponent's strategy. This weighting depends on the rationality of the player under uncertainty, which can be captured by $\alpha_i$.

Given this re-definition of the game, our next step is to study and discuss the game solution under both EUT and PT.

\section{Game Solution and Proposed Algorithm}\label{sec:algo}
\subsection{Mixed-Strategy Nash Equilibrium}
To solve the proposed game, under both EUT and PT, we seek to characterize the \emph{mixed-strategy Nash equilibrium} of the game:
\begin{definition}\label{def:MSNE}
A mixed strategy profile $\boldsymbol{p}^*$ is said to be a mixed strategy Nash equilibrium if for the defender, $d$, and attacker, $a$, we have:
\begin{align}\label{eq:ne}
U_d(\boldsymbol{p}_d^*,\boldsymbol{p}_{a}^*) \ge U_d(\boldsymbol{p}_d,\boldsymbol{p}_{a}^*), \  \forall p_d \in  \mathcal{P}_d, \nonumber\\
U_a(\boldsymbol{p}_a^*,\boldsymbol{p}_{d}^*) \ge U_a(\boldsymbol{p}_a,\boldsymbol{p}_{d}^*), \  \forall p_a \in  \mathcal{P}_a,
\end{align}
\end{definition}
\noindent
where $\mathcal{P}_i$ is the set of all probability distributions available to player $i$ over its action space $\mathcal{S}_i$. Note that, the mixed-strategy Nash equilibrium definition in (\ref{eq:ne}) is applicable for both EUT or PT, the difference would be in whether one is using (\ref{eq:multiplayerET}) or (\ref{eq:multiplayerPT}), respectively.

The mixed-strategy Nash equilibrium (MSNE) represents a state of the game in which neither the defender nor the attacker has an incentive to unilaterally deviate from its current mixed-strategy choice, given that the opposing player uses an MSNE strategy. Under EUT, this implies that under a rational choice the MSNE represents the case in which the defender has chosen its optimal randomization over its testing strategies and, thus, cannot improve its utility by changing these testing strategies; assuming that the attacker is also rational and utility maximizing as per EUT. Similarly, for the attacker, an MSNE under EUT implies that the attacker has chosen its optimal randomization over its choice of trojan to insert and, thus, cannot improve its utility by changing this choice of trojan; assuming that the defender is also rational and utility maximizing as per EUT.
Under PT, at the MSNE neither the attacker nor the defender can improve their perceived and subjective utility evaluation as per (\ref{eq:multiplayerPT}) by changing their MSNE strategies given their rationality levels captured by $\alpha_d$ and $\alpha_a$. Thus, under PT the MSNE is a state of the game in which neither the defender nor the attacker can further improve their utilities by unilaterally deviating from the MSNE, under their current uncertain perception on one another.

Given the zero-sum two-player nature of the game, finding closed-form solutions for the MSNE can follow the von Neumann indifference principle~\cite{TB00} under which, for each player at the MSNE, the expected utilities of any pure strategy choice, under the mixed strategies played by the opponent, are equal. Such a principle can be trivially shown to be applicable to both EUT and PT due to the one-to-one relationship between the probabilities and the weights. For the proposed game, given the large strategy space of both defender and attacker, it is challenging to solve the equations that stem from the indifference principle, for a general case. However, as will be shown for a numerical case study in Section~\ref{sec:num}, the game may admit multiple equilibrium points. Therefore, given an initial starting point of the system, one must develop learning algorithms~\cite{LEARN07} to characterize one of the MSNEs, as proposed next.
\subsection{Proposed Algorithm: Fictitious Play and Convergence Results}
To solve the studied hardware trojan detection game, under both EUT and PT, we propose a learning algorithm, summarized in Table~\ref{alg:alg1}, which is based on the fictitious play (FP) algorithm~\cite{TB00,LEARN07}. In this algorithm, each player uses its belief about the mixed strategy that its opponent will adopt. This belief stems from previous observations and is updated in every iteration.
\begin{algorithm}[!t]
\caption{Distributed Fictitious Play Learning Algorithm}
\label{alg:alg1}
\begin{algorithmic}[1]
\REQUIRE Action space of the defender, $\mathcal{S}_d$
\\\setlength\parindent{15pt} Action space of the attacker, $\mathcal{S}_a$
\\\setlength\parindent{15pt} Convergence parameter, $M$
\ENSURE Equilibrium mixed strategy vector of each player, $\boldsymbol{p}_d^*$ and $\boldsymbol{p}_a^*$ 
\STATE Initialize $\boldsymbol{\sigma}^0_a$ and $\boldsymbol{\sigma}^0_d$
\STATE Initialize convergence tester: $C_{\textrm{test}}=0$
\STATE Initialize iteration counter: $k=1$
\WHILE {Not Converged: $C_{\textrm{test}} ==0$}
\STATE Each player chooses its optimal strategy: 
\\ $s_d^k=\arg\max_{s_d\in\mathcal{S}_d}{U}_{d}(s_d,\boldsymbol{\sigma}_{d}^{k-1})$
\\ $s_a^k=\arg\max_{s_a\in\mathcal{S}_a}{U}_{a}(s_a,\boldsymbol{\sigma}_{a}^{k-1})$
\STATE Each player updates its observed empirical frequency:
\\ $\sigma_d^{k}(s_a)=\frac{k-1}{k} \cdot \sigma_d^{k-1}(s_a) + \frac{1}{k}\cdot \mathbf{1}_{\{s_a^{k-1}=s_a^{k}\}} \forall s_a\in\mathcal{S}_a ,$
\\ $\sigma_a^{k}(s_d)=\frac{k-1}{k} \cdot \sigma_a^{k-1}(s_d) + \frac{1}{k}\cdot \mathbf{1}_{\{s_d^{k-1}=s_d^{k}\}}, \forall s_d\in\mathcal{S}_d$
\STATE Check Convergence
\\ Calculate: 
\\ $C_d(s_a)=|\sigma_d^k(s_a)-\sigma_d^{k-1}(s_a)| \,\forall s_a\in\mathcal{S}_a ,$
\\ $C_a(s_d)=|\sigma_a^k(s_d)-\sigma_a^{k-1}(s_d)| \,\forall s_d\in\mathcal{S}_d $
\IF {Converged: \\$C_d(s_a)<\frac{1}{M} \,\,\&\& \,\,C_a(s_d)<\frac{1}{M}$ $\forall s_a\in\mathcal{S}_a$, $\forall s_d\in\mathcal{S}_d$ }
\STATE $C_{\textrm{test}}=1$
\STATE Compute strategy vectors:
\\ $\boldsymbol{p}_d^*=[\sigma_a^k(s_1), \sigma_a^k(s_2), ..., \sigma_a^k(s_{|\mathcal{S}_d|})]$
\\ $\boldsymbol{p}_a^*=[\sigma_d^k(s_1), \sigma_d^k(s_2), ..., \sigma_d^k(s_{|\mathcal{S}_d|})]$
\ENDIF
\STATE Update Counter: k=k+1
\ENDWHILE
\RETURN Strategy vectors: $\boldsymbol{p}_d^*$ and $\boldsymbol{p}_a^*$  
\end{algorithmic}
\end{algorithm}
In this regard, let $\boldsymbol{\sigma}_i^k$ be player $i$'s perception of the mixed strategy that $j$ adopts at time instant $k$. Here, each entry of $\boldsymbol{\sigma}_i^k$, given by $\sigma_i^k(s_j)$, represents the belief that $i$ has at time $k$ of the probability with which $j$ will play the strategy $s_j\in\mathcal{S}_j$. Such perception can be built based on the empirical frequency with which $j$ has used $s_j$ in the past. 
Thus, let $\eta_i^k(s_j)$ be the number of times that $i$ has observed $j$ playing strategy $s_j$ in the past, up to time instant $k$. Then, $\sigma_i^k(s_j)$ for each $s_j\in\mathcal{S}_j$ can be calculated as follows:
\begin{align}\label{eq:Probbelief}
\sigma_i^k(s_j)=\frac{\eta_i^k(s_j)}{\sum_{s'_j\in\mathcal{S}_j}\eta_i^k(s'_j)}.
\end{align}\vspace{0.01cm}

To this end, at time instant $k+1$, based on the vector of empirical probabilities that it has perceived until time instant $k$, $\boldsymbol{\sigma}_{i}^{k}$, each player $i$ chooses the strategy $s_i^{k+1}$ that maximizes its expected utility:
\begin{align}\label{eq:MaxUtility}
s_i^{k+1}=\arg\max_{s_i\in\mathcal{S}_i}{U}_{i}(s_i,\boldsymbol{\sigma}_{i}^{k}),
\end{align}
where the expected utility is calculated as the expected value of the the utility achieved by player $i$, when choosing strategy $s_i$, with respect to the perceived probability distribution at time instant $k$ over the set of actions of the opponent, $\boldsymbol{\sigma}_{i}^{k}$. This is equivalent to the notion of expected utility that we derived in~(\ref{eq:multiplayerET}) and~(\ref{eq:multiplayerPT}) under, respectively, EUT and PT.
 
After each player $i$ chooses its strategy at time instant $k+1$, it can update its beliefs as follows:
\begin{align}\label{eq:UpdateFreq}
\sigma_i^{k+1}(s_j)=\frac{k}{k+1} \cdot \sigma_i^k(s_j) + \frac{1}{k+1}\cdot \mathbf{1}_{\{s_j^k=s_j^{k+1}\}},
\end{align}
which is equivalent to calculating $\sigma_i^{k+1}(s_j)$ based on~(\ref{eq:Probbelief}).

In summary, at iteration $k+1$, player $i$ observes the actions of its opponent up to time $k$ and updates its perception of its opponent's mixed strategy based on~(\ref{eq:Probbelief}) or, equivalently,~(\ref{eq:UpdateFreq}). 
Subsequently, at time $k+1$, player $i$ chooses a strategy $s_i^{k+1}$ from its available strategy set $\mathcal{S}_i$ which maximizes its expected utility with respect to its updated perceived empirical frequencies as shown in~(\ref{eq:MaxUtility}). This expected utility would follow (\ref{eq:multiplayerET}) for EUT and (\ref{eq:multiplayerPT}) for PT. However, in the case of PT, after computing the empirical frequency based on~(\ref{eq:Probbelief}), these frequencies are weighed based on~(\ref{eq:weight}) such that, when choosing its optimal strategy $s_i^{k+1}$ as in~(\ref{eq:MaxUtility}), each player $i$ uses $\omega_i(\boldsymbol{\sigma}_i^k)$ instead of $\boldsymbol{\sigma}_i^k$. When this weighting is performed, we denote $U_i(s_i,\omega_i(\boldsymbol{\sigma}_i^k))$ by $U_i^{PT}(s_i,\boldsymbol{\sigma}_i^k)$. 

This learning process proceeds until the calculated empirical frequencies converge. Convergence is achieved when:
\begin{align}\label{eq:ConvergenceIt}
|\sigma^{k+1}_i(s_j)-\sigma^{k}_i(s_j)|<\frac{1}{M}, \,\forall s_j\in\mathcal{S}_j,\, \forall i\in\mathcal{N}  
\end{align} 
where $M$ is an arbitrary large number (that typically goes to infinity).

This algorithm requires initialization of the vectors of beliefs. Thus, we let $\boldsymbol{\sigma}_d^{0}$ and $\boldsymbol{\sigma}_a^{0}$ be the initial values adopted, respectively, by the defender and attacker. Such initialization vectors can be based on previous experience or can be any arbitrary probability distribution over the action space of the opponent.
This algorithm is shown in details in Table~\ref{alg:alg1}.  

For a two-player zero-sum game, it is well known that FP is guaranteed to converge to an MSNE~\cite{TB00,LEARN07}. In other words, it is guaranteed that the empirical frequency that player $i$ builds of the actions of its opponent $j$ converges to, $\boldsymbol{\sigma}_i^*$, which is nothing but the MSNE strategy of its opponent, i.e. $\boldsymbol{p}_j^*$ (defined in Definition~\ref{def:MSNE}). 
The convergence to $\boldsymbol{\sigma}_i^*$ is mathematically defined as the existence of an iteration number $\kappa_i$ such that, for $k>\kappa_i$
the belief of player $i\in\mathcal{N}$, $\boldsymbol{\sigma}_i^k$, converges to $\boldsymbol{\sigma}^*_i$, i.e. 
\begin{align}\label{eq:Convergence}
|\sigma^k_i(s_j)-\sigma_i^*(s_j)|<\frac{\epsilon}{M} \forall s_j\in\mathcal{S}_j,
\end{align} 
where $M$ is an arbitrary large number (that typically goes to infinity) and $\epsilon$ is a positive constant.
Hence, for our studied case, $\boldsymbol{\sigma}_a^*$ converges to to the MSNE of the defender, $\boldsymbol{p}_d^*$, and $\boldsymbol{\sigma}_d^*$ converges to the MSNE of the attacker, $\boldsymbol{p}_a^*$. However, to our knowledge, such a result has not been extended to PT, as done in the following theorem: 
\begin{theorem}\label{th:cov2}
For the proposed hardware trojan detection game, the proposed FP-based algorithm is guaranteed to converge to a mixed NE under both EUT and PT.
\end{theorem}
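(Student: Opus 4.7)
The plan is to split the argument into the two regimes of the theorem and handle them separately. For the EUT case, the dynamics are exactly a classical Brown-Robinson fictitious play in a finite two-player strategic game whose pure-strategy payoffs satisfy $u_a=-u_d$. The belief update in~(\ref{eq:UpdateFreq}) and the best-response step in~(\ref{eq:MaxUtility}), when the expectation is taken linearly as in~(\ref{eq:multiplayerET}), match the standard FP template for zero-sum games. I would therefore invoke Robinson's theorem (\cite{TB00,LEARN07}), which guarantees that the empirical frequencies $\boldsymbol{\sigma}_a^k$ and $\boldsymbol{\sigma}_d^k$ converge to the MSNE components $\boldsymbol{p}_d^\ast$ and $\boldsymbol{p}_a^\ast$ characterized in Definition~\ref{def:MSNE}. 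This settles the EUT half of the theorem essentially by citation.

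For the PT case, the key structural observation is that the only modification to the algorithm of Table~\ref{alg:alg1} is the inner best-response rule: the empirical-frequency update~(\ref{eq:UpdateFreq}) is identical, so the learning process still operates on true empirical frequencies, and each player $i$ now maximizes a \emph{linear} functional $s_i\mapsto\sum_{s_j} w_i(\sigma_i^k(s_j))\,u_i(s_i,s_j)$ in its own pure strategy. The first step of the PT argument is to record the elementary properties of the Prelec weighting~(\ref{eq:weight}) that will be used: $w_i$ is continuous, strictly increasing, satisfies $w_i(0)=0$ and $w_i(1)=1$, and hence has a continuous inverse on $[0,1]$. Using this, I would introduce, for each player, an auxiliary two-player zero-sum game with the same pure-strategy payoff matrix as $\Xi$ but whose "decision variable" for player $i$ is the distorted belief $\tilde{\boldsymbol{\sigma}}_i^k := w_i(\boldsymbol{\sigma}_i^k)$ applied componentwise. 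Strict monotonicity of $w_i$ guarantees that the set of pure-strategy best replies of $i$ to $\boldsymbol{\sigma}_i^k$ under~(\ref{eq:multiplayerPT}) coincides with the set of pure-strategy best replies of $i$ to $\tilde{\boldsymbol{\sigma}}_i^k$ in this transformed game, so the PT dynamics are the FP dynamics of a transformed zero-sum game viewed through the lens of the nonlinear change of variables $\boldsymbol{\sigma}_i\mapsto w_i(\boldsymbol{\sigma}_i)$.

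The main obstacle, and the step I expect to require the most care, is closing the loop between the transformed best response and the untransformed empirical frequency update: the weight $w_i$ is not affine, so $w_i$ does not commute with the averaging in~(\ref{eq:UpdateFreq}) and the distorted beliefs $\tilde{\boldsymbol{\sigma}}_i^k$ are not, strictly speaking, the empirical frequencies generated by FP in the transformed game. My plan to get around this is to argue convergence directly on $\boldsymbol{\sigma}_i^k$ rather than $\tilde{\boldsymbol{\sigma}}_i^k$: since the action sets $\mathcal{S}_d,\mathcal{S}_a$ are finite, the best-response correspondences under both~(\ref{eq:multiplayerET}) and~(\ref{eq:multiplayerPT}) are piecewise constant with finitely many regions, and one can adapt Robinson's inductive value-reduction argument to the PT setting by working with the value of the transformed game and using continuity of $w_i,w_i^{-1}$ to pull estimates back to the original beliefs, ensuring the convergence criterion~(\ref{eq:Convergence}) is eventually met. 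The resulting limit point $(\boldsymbol{\sigma}_a^\ast,\boldsymbol{\sigma}_d^\ast)$ is then, by the stationarity of the FP best-response step and the von Neumann indifference principle invoked in Section~\ref{sec:algo}, an MSNE of the PT game in the sense of Definition~\ref{def:MSNE}, which completes the PT half of the theorem.
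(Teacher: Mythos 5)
Your EUT half matches the paper exactly (both reduce to Robinson's theorem for two-player zero-sum FP), but your PT half has a genuine gap precisely at the point you yourself flag as delicate. You correctly observe that the Prelec weighting $w_i$ does not commute with the averaging in~(\ref{eq:UpdateFreq}), so the PT dynamics are \emph{not} the FP dynamics of any fixed transformed zero-sum game. But your proposed repair --- ``adapt Robinson's inductive value-reduction argument \ldots using continuity of $w_i, w_i^{-1}$ to pull estimates back'' --- does not close this. Robinson's induction rests on the additive structure of cumulative payoffs: the expected payoff of a pure strategy against the empirical frequency equals the time-average of realized payoffs, and the bounds in the induction are propagated by adding one round's payoff vector at a time. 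Under the nonlinear map in~(\ref{eq:weight}) this additivity is destroyed, and continuity of $w_i$ and its inverse cannot restore it; what you would actually need is a quantitative control of how the best-response regions in belief space deform under $w_i$ along the FP trajectory, which is the whole difficulty and is left unaddressed. Your closing step is also asserted rather than proved: that a limit point of the beliefs is an MSNE of the PT game does not follow merely from ``stationarity'' and the indifference principle.

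For comparison, the paper splits the PT case differently: it (somewhat tersely) asserts that convergence of the belief sequence to \emph{some} fixed point carries over from the EUT machinery, and then devotes the entire proof to showing that any such limit $\boldsymbol{p}^*$ must be an MSNE, by contradiction --- if $p_i^*(s_i)>0$ but some $s_i'$ yields a strictly higher $U_i^{\text{PT}}$ against $\boldsymbol{p}_j^*$, then choosing $\epsilon$ below half the utility gap as in~(\ref{eq:epsilon}) and using convergence of $\boldsymbol{\sigma}_i^n$ to $\boldsymbol{p}_j^*$, the chain of inequalities in~(\ref{eq:ineq}) shows $s_i$ is never a best reply beyond some iteration, so its empirical frequency is driven to zero, contradicting $p_i^*(s_i)>0$. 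That contradiction argument is the substantive content of the paper's proof and is exactly the piece missing from yours; conversely, the convergence-to-a-fixed-point step that you try hardest to justify is the one the paper handles by citation. To repair your write-up you should either import the paper's contradiction argument for the MSNE property, or supply an actual proof (not a sketch) that Robinson's induction survives the nonlinear reweighting.
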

\begin{proof}
The convergence of FP to an MSNE for EUT in a two-player zero-sum game is a known result~\cite{LEARN07,monderer1996fictitious,TB00}. For PT, one can easily verify that the convergence to a fixed point will directly follow from the EUT results in~\cite{LEARN07,monderer1996fictitious,TB00}. However, what remains to be shown is that this convergence will actually reach an MSNE for the case of PT. We prove this case using contradiction as follows.

Suppose that $\{\boldsymbol{\sigma}^k\}$ is a fictitious play process that will converge to a fixed point and a mixed strategy $\boldsymbol{p}^*$ after $k$ iterations (i.e. $\boldsymbol{\sigma}_i^k$ converges to $\boldsymbol{p}_j^*$ for both players after $k$ iterations). If the vector $\boldsymbol{p}^*=\{\boldsymbol{p}^*_i,\boldsymbol{p}^*_{j}\}$ is not an MSNE, then there must exist $s_i,\,s'_i \in \mathcal{S}_i$, such that $p_i^*(s_i)>0$ and
$$
U_i^{\text{PT}}\biggl(s'_i,\boldsymbol{p}^*_{j}\biggr)>U_i^{\text{PT}}\biggl(s_i,\boldsymbol{p}^*_{j}\biggr),
$$
where $U_i^{\text{PT}}(s'_i,\boldsymbol{p}^*_{j})$ is the expected utility with respect to the mixed strategies of $j$, the opponents of player $i$, when player $i$ chooses pure strategy $s'_i$. Here, we can choose a value $\epsilon$ that satisfies 
\begin{align}\label{eq:epsilon}
0<\epsilon<\frac{1}{2}|U_i^{\text{PT}}(s'_i,\boldsymbol{p}^*_{j})-U_i^{\text{PT}}(s_i,\boldsymbol{p}^*_{j})| 
\end{align} 
as $\boldsymbol{\sigma}^k$ converges to $\boldsymbol{p}^*$ at iteration $k$. Also, since the FP process decreases as the number of iterations $n$ increases, the utility distance of a pure strategy between two neighboring iterations must be less than $\epsilon$ after a certain iteration $k$. For $n \ge k$, the FP process can be written as:\\
\begin{align}\label{eq:ineq}
\begin{split}
U_i^{\text{PT}}(s_i,\boldsymbol{\sigma}_{i}^n)=&\sum_{s_j \in \mathcal{S}_j} u_i(s_i,s_{j}) w_i(\sigma_{i}^n(s_j))\\
\le &\sum_{s_j \in \mathcal{S}_j} u_i(s_i,s_j) w_i(p^*_j(s_j))+\epsilon\\
< &\sum_{s_j \in \mathcal{S}_j} u_i(s'_i,s_j) w_i(p_j^*(s_j))-\epsilon\\
\le &\sum_{s_j \in \mathcal{S}_j} u_i(s'_i,s_{j}) w_i(\sigma_i^n(s_j))\\
=&U_i^{\text{PT}}(s'_i,\boldsymbol{\sigma}_{i}^n),
\end{split}
\end{align}\\
where the two equalities in~(\ref{eq:ineq}) stem directly from the definition of expected utility given in~(\ref{eq:multiplayerPT}) when $i$'s strategy is fixed to $s_i$ or $s'_i$, and the transition from step 2 to step 3 stem directly from~(\ref{eq:epsilon}). Thus, player $i$ would not choose $s_i$ but would rather choose $s'_i$ after the $n^\textrm{th}$ iteration; mathematically, we will have $p_i(s_i)=0$ and $w_{j}(\sigma_j(s_i))=0$ (the other player's perception of $p_i(s_i)$). Hence, we get $p_i(s_i)=0$ which contradicts the initial assumption that $p_i(s_i)>0$; thus the theorem is shown.
\end{proof}

\section{Numerical Case Study: Analytical and Simulation Results}\label{sec:num}
For simulating the hardware trojan detection game, we consider the scenario in which the attacker, denoted hereinafter by player $1$, has four types of trojans (strategies) $A$, $B$, $C$, and $D$, i.e., $\mathcal{S}_a=\mathcal{T}=\{A,B,C,D\}$ whose damage values are $V_A=1$, $V_B=2$, $V_C=4$, and $V_D=12$. These numbers are used to illustrate different damage levels to the system. For example, these values can be viewed as monetary losses to the defender and, hence, attacking gains to the attacker. Given that there are no existing empirical data on the hardware detection game, we have chosen illustrative numbers that show four varying levels of damage. However, naturally, the subsequent analysis may be extended to analyze the game under other damage values. In this scenario, we assume that the defender, referred to as player $2$, can test for $K=2$ types of trojans at a time and, thus, it has $6$ strategies. Without loss of generality, we assume that the fine is similar for all types of trojans, i.e., $F_{s_a} = F\ \forall s_a\in \mathcal{S}_a$.

For this numerical case study, we will first derive several analytical results that allow us to gain more insights on the proposed hardware detection game under both EUT and PT representations. Then, we present several simulation results that provide additional insights and analysis on the proposed game and on the impact of PT consideration in the game model.

\subsection{Analytical Results}\label{sec:analy}
In this subsection, we derive a series of results to gain more insights on the Nash equilibria of the game as well as on the possible values of the fine and how they impact the game under both PT and EUT. First, we can state the following theorem with regard to the Nash equilibria of the game under both EUT and PT:

\begin{theorem}\label{Theorem:AttackerUniqueness}
When $F>0$, under both EUT and PT, the proposed game can admit multiple equilibria. However, in all of these equilibria, the attacker has the same mixed-strategy Nash equilibrium strategies.
\end{theorem}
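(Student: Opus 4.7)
The plan is to prove the two assertions of the theorem separately: first, that the attacker's mixed strategy is uniquely determined at every MSNE under both EUT and PT, and second, that the defender can have multiple equilibrium mixed strategies, producing multiple MSNE that share the same attacker action.

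For attacker uniqueness, I would combine the indifference conditions from both sides of the game. The attacker's indifference over its support $S$ yields $q_i = (V_i - v_a)/(F+V_i)$ for $i \in S$ under EUT, where $q_i = \sum_{s_d \ni i} p_d(s_d)$, with an analogous relation for the PT-weighted marginals $\tilde{q}_i = \sum_{s_d \ni i} w_a(p_d(s_d))$ under PT. Combined with the combinatorial identity $\sum_i q_i = K$ and the self-consistency $S = \{i : V_i > v_a\}$, this uniquely identifies the pair $(S, v_a)$ for each $F > 0$. The defender's indifference over its support then requires $\sum_{i \in s_d} W_i$ to be constant across the subsets in defender's support, where $W_i = (F+V_i) p_a(i)$ under EUT and $W_i = (F+V_i) w_d(p_a(i))$ under PT. Since the fractional marginals $q_i$ for $i \in S$ force the defender's support to cover enough pairs of $\binom{S}{2}$, the $W_i$'s must be equalised on $S$. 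Normalising $\sum_{i \in S} p_a(i) = 1$ then pins down $p_a$ uniquely; under PT, strict monotonicity of the Prelec weighting $w_d$ lets one invert $w_d(p_a(i)) = c/(F+V_i)$ to recover a unique $p_a$.

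For the multiplicity of MSNE, I would exhibit a parametric family of defender distributions $p_d$ all realising the same marginals $q_i$ extracted from the attacker's indifference. When $|S| = T$, the defender has $\binom{T}{K}$ mixing variables subject to $T$ marginal equations with one redundancy (owing to $\sum_i q_i = K$), leaving a positive-dimensional polytope of valid defender MSNE strategies — for the illustrative case $T = 4$, $K = 2$, this is a two-dimensional polytope. Each member of this polytope is paired with the same unique attacker strategy, establishing the theorem's claim; the analogous construction under PT uses the weighted marginals $\tilde{q}_i$ together with invertibility of $w_a$ to map each admissible weighted-family back to a bona fide defender mixed strategy. The main obstacle will be rigorously showing that the defender's support at MSNE is rich enough to equalise $W_i$ on $S$: I would rule out smaller supports through a case analysis showing that restricting defender's support forces marginals $q_i$ for $i \in S$ that are degenerate (zero or one) or inconsistent with attacker's indifference for generic $F > 0$. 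Extending this reasoning to PT requires carrying the Prelec weighting through each case, which is tractable given its strict monotonicity but introduces nonlinear algebra that must be handled with care.
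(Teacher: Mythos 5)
Your route is genuinely different from the paper's. The paper works entirely with the explicit $6\times 4$ payoff matrix $\boldsymbol{M}_a$ of the case study: it imposes the indifference conditions as the linear system $\boldsymbol{M}_a\boldsymbol{p}_a^*=\lambda\mathbf{1}$, computes $\mathrm{rank}(\boldsymbol{M}_a)=4$ to get uniqueness of $\boldsymbol{p}_a^*$, notes that $\mathrm{rank}(\boldsymbol{M}_d)=4<6$ to get a two-parameter family of defender equilibria (its Eq.~(13)), and disposes of PT with a brief rank/Cayley--Hamilton remark. You instead decompose through the coverage marginals $q_i=\sum_{s_d\ni i}p_d(s_d)$: attacker indifference gives $q_i=(V_i-v_a)/(F+V_i)$, the identity $\sum_i q_i=K$ pins down $(S,v_a)$, defender indifference equalizes $W_i=(F+V_i)p_a(i)$ on $S$, and the count of $\binom{T}{K}$ unknowns against $|S|$ marginal constraints yields the defender's equilibrium polytope. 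Your accounting reproduces the paper's numbers ($6-4=2$ free parameters for $T=4$, $K=2$), and your argument has the advantage of applying to arbitrary $T$, $K$, and damage profiles --- precisely the generalization the paper's Remark~1 defers to a case-by-case rank computation.

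Two gaps remain, one of which you flag and one of which you do not. First, the step you call the ``main obstacle'' is a real obstacle: defender indifference holds only over the defender's \emph{support}, so equalizing the sums $\sum_{i\in s_d}W_i$ over supported subsets only determines the differences $W_i-W_j$ for pairs connected through the exchange structure of that support (e.g., a defender supported only on $\{AB,CD\}$ gives $W_A+W_B=W_C+W_D$ and nothing more). You need either the connectivity case analysis you sketch, or an appeal to the structure of the optimal-strategy polytope in zero-sum games (any attacker optimal strategy must equalize against a relative-interior defender optimal strategy, whose support is maximal), to close this; as written it is asserted rather than proved. Second, your PT extension leans on ``the analogous relation for the PT-weighted marginals $\tilde q_i=\sum_{s_d\ni i}w_a(p_d(s_d))$,'' but the anchor $\sum_i q_i=K$ is a \emph{linear} identity in $p_d$ and does not survive the Prelec transformation: $\sum_i\tilde q_i=K\sum_{s_d}w_a(p_d(s_d))\neq K$ in general, since the weights do not sum to one. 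The defender-side step (inverting $w_d(p_a(i))=c/(F+V_i)$ by strict monotonicity) is fine, but the identification of $(S,v_a)$ under PT needs a replacement for the lost combinatorial identity, and you do not supply one. The paper avoids both issues only by implicitly assuming full indifference over all pure strategies, so neither proof is airtight on these points --- but if you intend your argument as a rigorous generalization, these are the two places it must be repaired.
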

\begin{proof}
To capture the pure strategy payoffs of the defender given the attacker's mixed strategy, we use the indifference principle as per the following equation:
\begin{equation}\label{eq:Ma}
\begin{split}
\boldsymbol{U}_d(\boldsymbol{s}_d,\boldsymbol{p}_a^*)=&\boldsymbol{M}_a \cdot \boldsymbol{p}_a^*\\
\begin{bmatrix}
U_d(AB,\boldsymbol{p}_a^*)\\
U_d(AC,\boldsymbol{p}_a^*)\\
U_d(AD,\boldsymbol{p}_a^*)\\
U_d(BC,\boldsymbol{p}_a^*)\\
U_d(BD,\boldsymbol{p}_a^*)\\
U_d(CD,\boldsymbol{p}_a^*)\\
\end{bmatrix}
=&
\begin{bmatrix}
F&F&-4&-12\\
F&-2&F&-12\\
F&-2&-4&F\\
-1&F&F&-12\\
-1&F&-4&F\\
-1&-2&F&F\\
\end{bmatrix}\cdot
\begin{bmatrix}
p_a^*(A)\\
p_a^*(B)\\
p_a^*(C)\\
p_a^*(D)\\
\end{bmatrix},\\
\end{split}
\end{equation}
where $p_a^*(A)+p_a^*(B)+p_a^*(C)+p_a^*(D)=1$.  $\boldsymbol{M}_a$ is the utility matrix of the attacker which we can use to obtain the attacker's $\boldsymbol{p}_a^*$. Using the indifference principle, for the defender, an  MSNE must satisfy $U_d(AB,\boldsymbol{p}_a^*)=U_d(AC,\boldsymbol{p}_a^*)=U_d(AD,\boldsymbol{p}_a^*)=U_d(BC,\boldsymbol{p}_a^*)=U_d(BD,\boldsymbol{p}_a^*)=U_d(CD,\boldsymbol{p}_a^*).$ 
In addition,
\begin{align}
\text{rank}(\boldsymbol{M}_a)=\text{rank}
\left(\begin{bmatrix}
F&F&-4&-12\\
0&-2-F&F+4&0\\
0&-2-F&0&F+12\\
-1&F&F&-12\\
0&0&0&0\\
0&0&0&0\\
\end{bmatrix}\right)=4.
\end{align}
Thus, the attacker has only one solution $\boldsymbol{p}_a^*$ since: {1)} the rank of the attacker's utility matrix is equal to the dimension of its mixed strategy and {2)} the auxiliary equation $\sum_{s_a} p_a^*(s_a)=1$ balances the requirement of $\boldsymbol{U}_d(\boldsymbol{s}_d,\boldsymbol{p}^*_a)$ in (\ref{eq:Ma}). Similarly, we capture the pure strategy payoffs of the attacker via the defender's mixed strategy:
\begin{align}\label{eq:Md}
\boldsymbol{U}_a(\boldsymbol{s}_a,\boldsymbol{p}_d^*)=\boldsymbol{M}_d \cdot \boldsymbol{p}_d^*,
\end{align}
where $\boldsymbol{M}_d=-\boldsymbol{M}_a^{T}$. In particular, $\sum_{s_d} p_d^*(s_d)=1$ and $U_a(A,\boldsymbol{p}_d^*)=U_a(B,\boldsymbol{p}_d^*)=U_a(C,\boldsymbol{p}_d^*)=U_a(D,\boldsymbol{p}_d^*)$ at the MSNE for the attacker. Since the rank of the defender's utility matrix, $\text{rank}(\boldsymbol{M}_d)=\text{rank}(\boldsymbol{M}_a)=4$, is less than the number of defender's strategies, we get multiple defender MSNE strategies.

As an example, when $F=8$ in (\ref{eq:Ma}) and (\ref{eq:Md}), we could obtain the only NE for the attacker under EUT, $\boldsymbol{p}_a^*=[0.32, 0.29, 0.24, 0.16]^T$. Also, we can compute the multiple NEs of the defender under EUT:
\begin{equation}\label{eq:p2solu}
\begin{split}
p_d^*(AB)=&-0.2259 + p_d(CD), \\
p_d^*(AC)=& -0.1290 + p_d(BD), \\
p_d^*(AD) =&0.7097 - p_d(BD) - p_d(CD), \\
p_d^*(BC)=& 0.6452- p_d(BD) -p_d(CD).
\end{split}
\end{equation}

Under PT, the auxiliary equation is equivalent to $\sum_{p} \exp(-(-\ln w(p))^{\frac{1}{\alpha}})=1$. This equation does not change the ranks of neither $\boldsymbol{M}_a$ nor $\boldsymbol{M}_d$ nor the number of eigenvalues. Thus, based on Cayley-Hamilton theorem, PT and EUT have the same number of eigenvalues and then, have the same number of MSNEs. This is applicable for any value of the fine. 
\end{proof}

Next, we show that, for both EUT and PT, there exists a value $F^v$ for the fine at which neither the attacker nor the defender will win, i.e., the value of the game is zero:

\begin{theorem}\label{th:FEUT}
For EUT and PT, at the MSNE, there exists a fine value, respectively, $F^v_{\textrm{EUT}}$ and $F^v_{\textrm{PT}}$ such that neither the attacker nor the defender wins.
\end{theorem}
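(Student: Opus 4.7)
The plan is to treat the fine $F$ as a parameter and show that the game value $V(F):=U_d(\boldsymbol{p}_d^*,\boldsymbol{p}_a^*)$ at the MSNE passes through zero as $F$ varies, handling EUT and PT separately. The value $V(F)$ is well defined by the minimax theorem even though $\boldsymbol{p}_d^*$ need not be unique (Theorem~2). Since the game is zero-sum, ``neither the attacker nor the defender wins'' is exactly the condition $V(F)=0$.

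For EUT, I would invoke the attacker's indifference condition at the MSNE: every $s_a$ in the attacker's support yields the same expected utility against $\boldsymbol{p}_d^*$. Letting $q_{s_a}:=\sum_{s_d\ni s_a} p_d^*(s_d)$ be the aggregate detection probability for trojan $s_a$, the attacker's expected payoff for pure strategy $s_a$ is $V_{s_a}(1-q_{s_a})-F\,q_{s_a}$, and this common indifference value equals $-V(F)$. Imposing $V(F)=0$ reduces to the pointwise requirement $q_{s_a}=V_{s_a}/(V_{s_a}+F)$ for every $s_a\in\mathcal{T}$. Summing and using the identity $\sum_{s_a} q_{s_a}=K$ (each $K$-subset contributes $K$ to the aggregate) produces the scalar equation
\[
\Phi(F)\;:=\;\sum_{s_a\in\mathcal{T}}\frac{V_{s_a}}{V_{s_a}+F}\;=\;K.
\]
Because $\Phi$ is continuous and strictly decreasing on $[0,\infty)$ with $\Phi(0)=T>K$ and $\Phi(F)\to 0$ as $F\to\infty$, the intermediate value theorem produces a unique $F^v_{\textrm{EUT}}>0$ solving $\Phi(F)=K$. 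The remaining feasibility step is to exhibit $\boldsymbol{p}_d^*\ge 0$ summing to $1$ that realizes the prescribed $q$-vector; this is immediate from the standard fact that $\{q\in[0,1]^T:\sum_{s_a} q_{s_a}=K\}$ coincides with the convex hull of indicator vectors of $K$-subsets of $\mathcal{T}$.

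For PT the same skeleton applies, but the attacker now perceives the weighted quantities $\tilde q_{s_a}:=\sum_{s_d\ni s_a} w_a(p_d^*(s_d))$, with $w_d(p_a^*(s_a))$ appearing analogously on the defender's side. Setting $V_{\textrm{PT}}(F)=0$ forces $\tilde q_{s_a}=V_{s_a}/(V_{s_a}+F)$ pointwise, but this cannot be collapsed into a single scalar equation because $\sum_{s_a}\tilde q_{s_a}$ is no longer pinned at $K$ (the Prelec weight being nonlinear). Instead I would argue existence directly via continuity: $V_{\textrm{PT}}$ is continuous in $F$ since $w_a,w_d$ are continuous and the MSNE varies continuously with $F$; at $F=0$ every defender payoff is nonpositive and not all zero, so $V_{\textrm{PT}}(0)<0$; for $F$ large the fine term dominates the attacker's indifference relation, forcing $V_{\textrm{PT}}(F)>0$. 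The intermediate value theorem then yields $F^v_{\textrm{PT}}$.

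The main obstacle is the PT half. Continuity of $V_{\textrm{PT}}$ in $F$ is intuitive but delicate because, by Theorem~2, the MSNE map $F\mapsto(\boldsymbol{p}_d^*,\boldsymbol{p}_a^*)$ is only a correspondence; a Berge-maximum-theorem style argument may be needed, or one can sidestep the issue by working directly with the minimax value function, whose continuity in the payoff parameters is more standard. Establishing $V_{\textrm{PT}}(F)>0$ for large $F$ also requires ruling out pathological MSNE configurations where the attacker concentrates on low-damage trojans yet the defender still loses on average under the weighted perception. A secondary subtlety present in both cases is verifying that the attacker's MSNE has full support at the solution $F$, so that the pointwise indifference condition truly applies to every $s_a\in\mathcal{T}$ rather than to a strict subset.
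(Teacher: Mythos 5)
Your EUT argument is correct and takes a genuinely different---and in fact cleaner and more general---route than the paper's. The paper works entirely inside the four-trojan case study: it uses the defender-side indifference to conclude $\boldsymbol{M}_a\boldsymbol{p}_a^{*}=\boldsymbol{0}$ at the zero-value point and reads off $F^v_{\textrm{EUT}}=\bigl(4p_a^*(C)+12p_a^*(D)\bigr)/\bigl(p_a^*(A)+p_a^*(B)\bigr)$ from one row. Since $\boldsymbol{p}_a^*$ itself depends on $F$, this is an implicit fixed-point relation whose solvability the paper never actually establishes. Your reduction decouples it: attacker-side indifference at value zero forces the coverage probabilities $q_{s_a}=V_{s_a}/(V_{s_a}+F)$, the identity $\sum_{s_a}q_{s_a}=K$ turns this into the strategy-free scalar equation $\Phi(F)=K$, and monotonicity with $\Phi(0)=T>K$ gives existence and uniqueness by the intermediate value theorem; the hypersimplex fact supplies a feasible $\boldsymbol{p}_d^*$ realizing the $q$-vector, and $p_a^*(s_a)\propto 1/(F+V_{s_a})$ (full support, so the indifference you invoke is not an issue for the existence direction) closes the equilibrium with value exactly zero. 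Substituting $p_a^*(s_a)\propto 1/(F+V_{s_a})$ into the paper's row condition reproduces $\Phi(F)=K$, so the two characterizations agree, but yours actually proves existence and extends to arbitrary $T$, $K$, and damage vectors. For EUT you establish strictly more than the paper does.

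The PT half is where both arguments are thin, and the gaps you flag are real and unresolved---though the paper's own treatment is no stronger. The paper merely rearranges the zero-utility condition into a ratio of quadratic forms in the (unknown, $F$-dependent) equilibrium strategies and declares the result computable because the denominator is nonzero; that is again a fixed-point relation, not an existence proof. Your IVT-on-the-value plan is the right instinct, but two points deserve explicit attention. First, under PT the perceived utilities are no longer exact negatives of one another, so ``neither player wins'' is not a single scalar condition; the paper's appendix in fact imposes $U_a^{\text{PT}}=0$ and $U_d^{\text{PT}}=0$ simultaneously---two equations in the one unknown $F$---and you should state which quantity your $V_{\textrm{PT}}(F)$ tracks. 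Second, continuity of that quantity along an equilibrium correspondence that is only upper hemicontinuous, together with positivity for large $F$, are precisely the steps that would turn the sketch into a proof; neither you nor the paper supplies them.
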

\begin{proof}
See the Appendix.
\end{proof}

Given Theorem~\ref{th:FEUT}, we can show the following result:
\begin{corollary}
There exists a minimum fine value $F_{\textrm{min}}$, such that, the utility of the attacker will be positive (the attacker wins overall) under both EUT and PT, i.e. $U_a>0, U_d<0$.
\end{corollary}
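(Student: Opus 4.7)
The plan is to combine Theorem~\ref{th:FEUT} with the monotonicity and continuity of the MSNE value $U_a^*(F)$ viewed as a function of the fine $F$. By Theorem~\ref{th:FEUT}, for each of EUT and PT, $U_a^*$ attains zero at some $F^v$. I would define $F_{\textrm{min}}:=\inf\{F\ge 0:U_a^*(F)\le 0\}$ and show that (i) $F_{\textrm{min}}>0$ and (ii) $U_a^*(F)>0$ for every $F<F_{\textrm{min}}$; together with the zero-sum identity $U_d^*=-U_a^*$, this yields the corollary.

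The argument rests on three properties of $U_a^*$. First, \emph{monotonicity}: for any fixed profile $(\boldsymbol{p}_a,\boldsymbol{p}_d)$, inspection of~(\ref{eq:multiplayerET}) and~(\ref{eq:multiplayerPT}) shows that the attacker's utility is affine in $F$ with nonpositive slope $-\sum_{s_d\in\mathcal{S}_d}\sum_{s_a\in s_d}p_a(s_a)\tilde{p}_d(s_d)$, where $\tilde{p}_d(s_d)$ equals $p_d(s_d)$ under EUT and $w_a(p_d(s_d))$ under PT; taking the maximin over strategies preserves this ordering, so $U_a^*$ is non-increasing in $F$. Second, \emph{continuity}: the joint payoff is continuous in $(\boldsymbol{p}_a,\boldsymbol{p}_d,F)$ on a compact domain, using that the Prelec map~(\ref{eq:weight}) is continuous on $[0,1]$; continuous parametric dependence of a min-max value then yields continuity of $F\mapsto U_a^*(F)$. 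Third, \emph{strict positivity at $F=0$}: the defender tests only $K<T$ trojans, so evaluating $U_a(\cdot;0)$ at the uniform attacker mixture $p_a(s_a)=1/T$ bounds $U_a^*(0)$ from below by a positive multiple of $\min_{s_a}V_{s_a}$, because every pure defender strategy leaves $T-K$ trojans untested and all damages are positive (under PT, one additionally uses that $\sum_{s_d}w_a(p_d(s_d))$ has a strictly positive minimum over the compact simplex, since at least one component $p_d(s_d)\ge 1/|\mathcal{S}_d|$ forces $w_a(p_d(s_d))\ge w_a(1/|\mathcal{S}_d|)>0$).

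Combining the three properties, $F\mapsto U_a^*(F)$ is continuous, non-increasing, strictly positive at $F=0$, and zero at $F=F^v$; hence $F_{\textrm{min}}$ is well defined with $0<F_{\textrm{min}}\le F^v$, and continuity together with monotonicity give $U_a^*(F)>U_a^*(F_{\textrm{min}})=0$ for every $F<F_{\textrm{min}}$. The main obstacle is the continuity step under PT, since the textbook minimax-value-continuity argument relies on bilinearity of the expected payoff, which fails in~(\ref{eq:multiplayerPT}) because of the Prelec weighting. The fix is modest: $w_a$ is continuous on $[0,1]$ and the strategy simplices are compact, so the joint payoff is continuous in all its arguments, and the value of a parametric zero-sum game with continuous payoff is continuous in the parameter by a Berge/maximum-theorem-style argument. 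This small extension is the only point where the PT case departs from the standard EUT treatment, and once it is in hand the conclusion is immediate and identical in both the EUT and PT settings.
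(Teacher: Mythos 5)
Your proposal follows essentially the same route as the paper's proof: Theorem~\ref{th:FEUT} supplies the zero crossing, and monotonicity of the equilibrium payoff in $F$ does the rest. The paper's own argument is only two sentences (``the utilities \dots intersect at $0$'' and ``the derivative of the utility with respect to $F$ \dots is monotonic''), so your additions --- the explicit affine-in-$F$ computation of the slope, the continuity of the value in $F$, and above all the strict positivity of $U_a^*$ at $F=0$ (without which monotonicity plus a zero crossing would only give $U_a^*\ge 0$ below $F^v$, not $U_a^*>0$) --- are genuine tightenings of the same method rather than a different one. The one soft spot, which the paper's proof shares and glosses over, is the PT case: the perceived payoffs in~(\ref{eq:multiplayerPT}) are neither bilinear nor negatives of one another, so the PT equilibrium payoff is not a maximin value, Berge-type continuity of the max-min does not automatically transfer to the equilibrium payoff (which may not even be unique given the multiplicity of defender equilibria in Theorem~\ref{Theorem:AttackerUniqueness}), and the zero-sum identity $U_d^*=-U_a^*$ that you invoke holds only if $U_a$ and $U_d$ are read as the objective expected payoffs evaluated at the PT equilibrium rather than the players' perceived utilities.
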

\begin{proof}
Based on Theorem~\ref{th:FEUT}, it can be shown that the utilities of the attacker and defender intersect at $0$. Also, the derivative of the utility with respect to $F$ can be easily seen to be monotonic. Thus, there exists a fine $F_{\textrm{min}}$ such that $U_a>0$ and $U_d<0$.
\end{proof}
\vspace{-0.2cm}
\begin{remark}
The generalization of the results in Theorems~\ref{Theorem:AttackerUniqueness}~and~\ref{th:FEUT} is directly dependent on the general computation of the rank of matrix $\boldsymbol{M}_a$ for an arbitrary number of trojan types, number of types for which the defender can simultaneously test, as well as the fine and damage values associated with every trojan type. The derivation of Theorems~\ref{Theorem:AttackerUniqueness}~and~\ref{th:FEUT}, in this section, and the proposed algorithm in Table~\ref{alg:alg1} provide a general methodology which can be followed to derive, respectively, analytical and numerical results for any general trojan detection game. 
\end{remark}

\subsection{Numerical Results}
In this subsection, we run extensive simulations to understand the way PT and EUT considerations impact the hardware trojan detection game. To obtain the mixed Nash equilibrium under EUT and PT, we use the proposed algorithm in Table~\ref{alg:alg1}. The initial strategies are chosen as follows: we choose the attacker's initial strategy set as $p_a=[0.2083\ 0.1667\ 0.3333\ 0.2917]^T$ and the defender's initial strategy set as $p_d=[0.2051\ 0.2564\ 0.2564\ 0.0513\ 0.0513\ 0.1795]^T$. In the subsequent simulations, we assume that the fine for all trojans is equal to $F_{s_a} = F =8,\ \forall s_a \in \mathcal{S}_a$, unless stated otherwise. We vary the values of the rationality parameters $\alpha_a$ (for the attacker) and $\alpha_d$ (for the defender).
\begin{figure}[!t]
 \begin{center}
 \vspace{-0.3cm}
  \includegraphics[width=8cm]{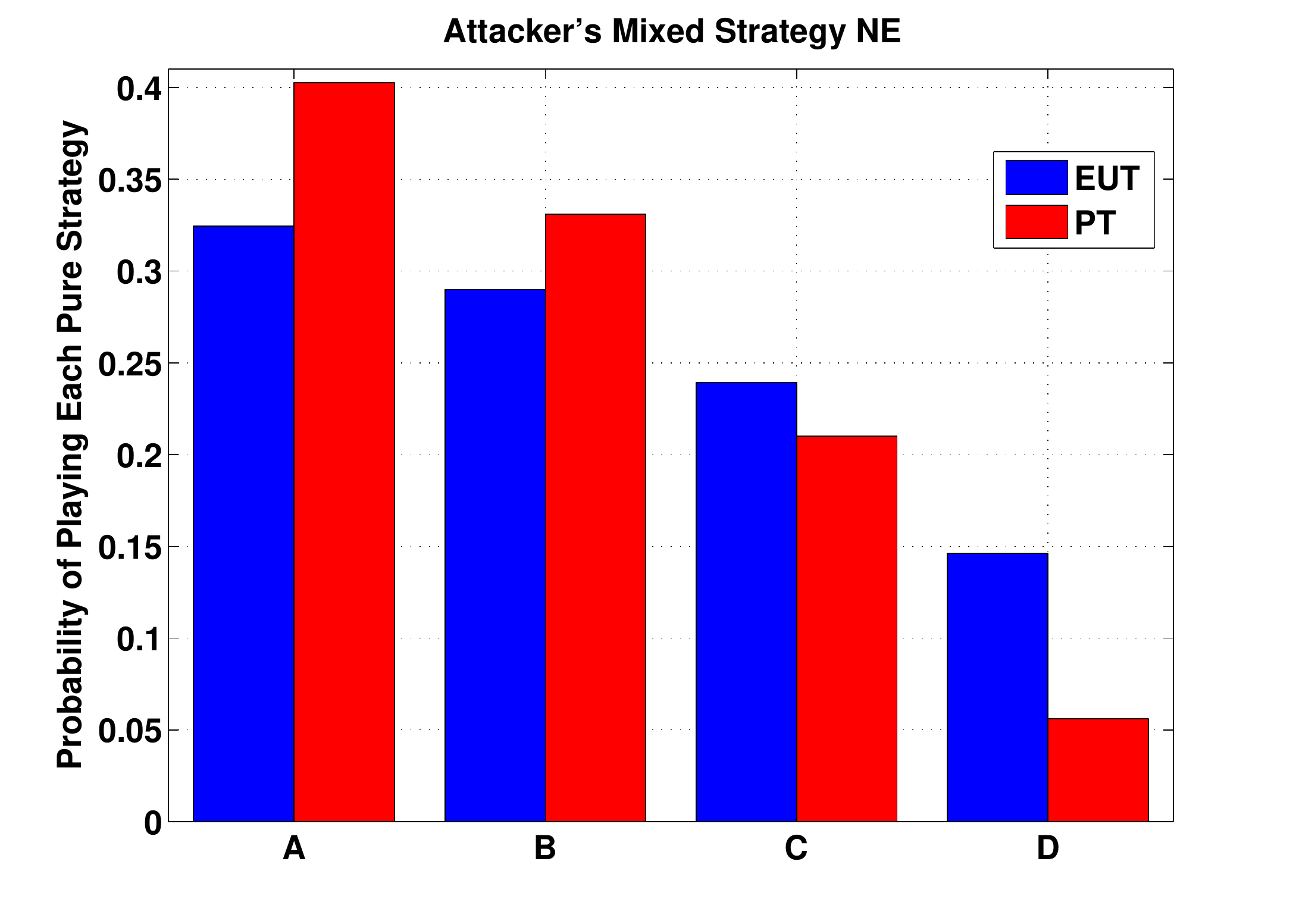}
 \vspace{-0.3cm}
   \caption{\label{fig:atk5} Attacker mixed-strategies at the equilibrium for both EUT and PT with $\alpha_a=\alpha_d=0.5$.}
\end{center}\vspace{-0.3cm}
\end{figure}
\begin{figure}[!t]
 \begin{center}
 \vspace{-0.3cm}
  \includegraphics[width=8cm]{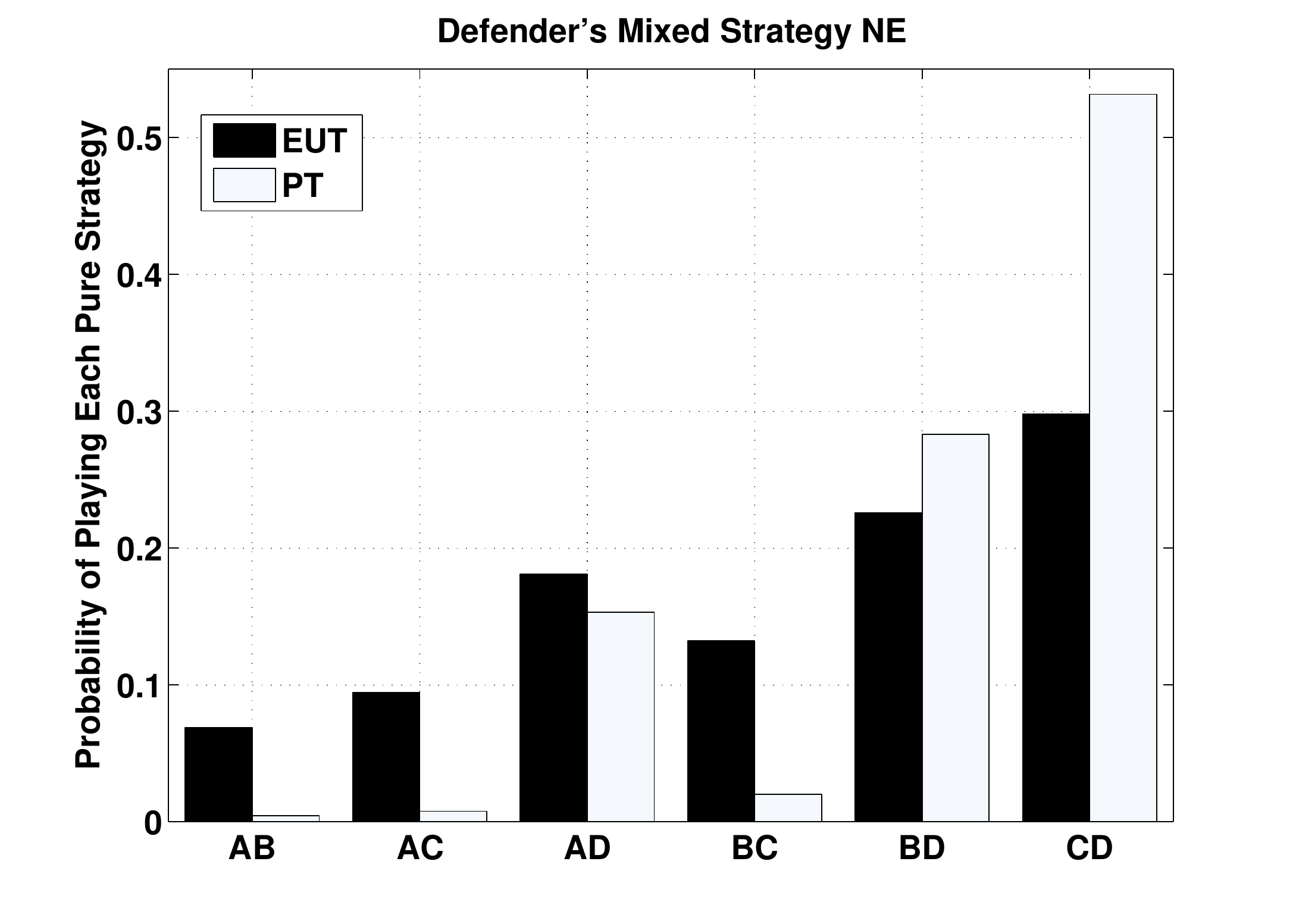}
 \vspace{-0.3cm}
   \caption{\label{fig:dfd5} Defender mixed-strategies at the equilibrium for both EUT and PT with $\alpha_a=\alpha_d=0.5$.}
\end{center}\vspace{-0.6cm}
\end{figure}

Fig.~\ref{fig:atk5} shows the four mixed strategies for the attacker at both the EUT and PT equilibria reached via fictitious play. In this figure as well as in Figs.~\ref{fig:dfd5} and~\ref{fig:ut5}, we choose $\alpha_a=\alpha_d=0.5$ for both attacker and defender under PT reflecting the same level of subjectivity in the behavior of the attacker and defender. Here, we can first see that the equilibrium mixed strategies of the attacker are different between PT and EUT. Under PT, the attacker is more likely to insert trojans such as $A$ or $B$, as compared to EUT, whose value is less than $C$ and $D$. This shows that the attacker becomes more risk averse under PT and, thus, aims at inserting low-valued trojans, rather than focusing on higher valued trojans which are more likely to be detected due to their prospective damage. The impact of such risk aversion on the defender's behavior at the equilibrium is more pronounced as seen in Fig.~\ref{fig:dfd5}. Under PT, the defender will more aggressively attempt to test for the trojan with the highest damage. In this respect, we can see that, under PT, the defender will have a $55\%$ likelihood to test for the two most damaging trojans while ignoring the tests that pertain to trojans $A$ and $B$.

A conservative PT-based defense approach coupled with a risk-averse attacker will naturally lead to a lower overall detection probability and, thus, will lead to further damage to the system, when compared with the fully rational path of EUT. In other words, compared to rational EUT, the attacker is more likely to win in the PT scenario in which both the attacker and the defender deviate from the rational behavior. This result is corroborated in Fig.~\ref{fig:ut5}. In this figure, we show the expected utility of the attacker and defender, at both the EUT and PT equilibria. Clearly, under PT, the attacker is able to incur more damage as compared to EUT, and thus, the overall value of the game decreases from $2.1930$ to $1.5356$; a $30\%$ decrease in utility!

\begin{figure}[!t]
 \begin{center}
 \vspace{-0.3cm}
  \includegraphics[width=8cm]{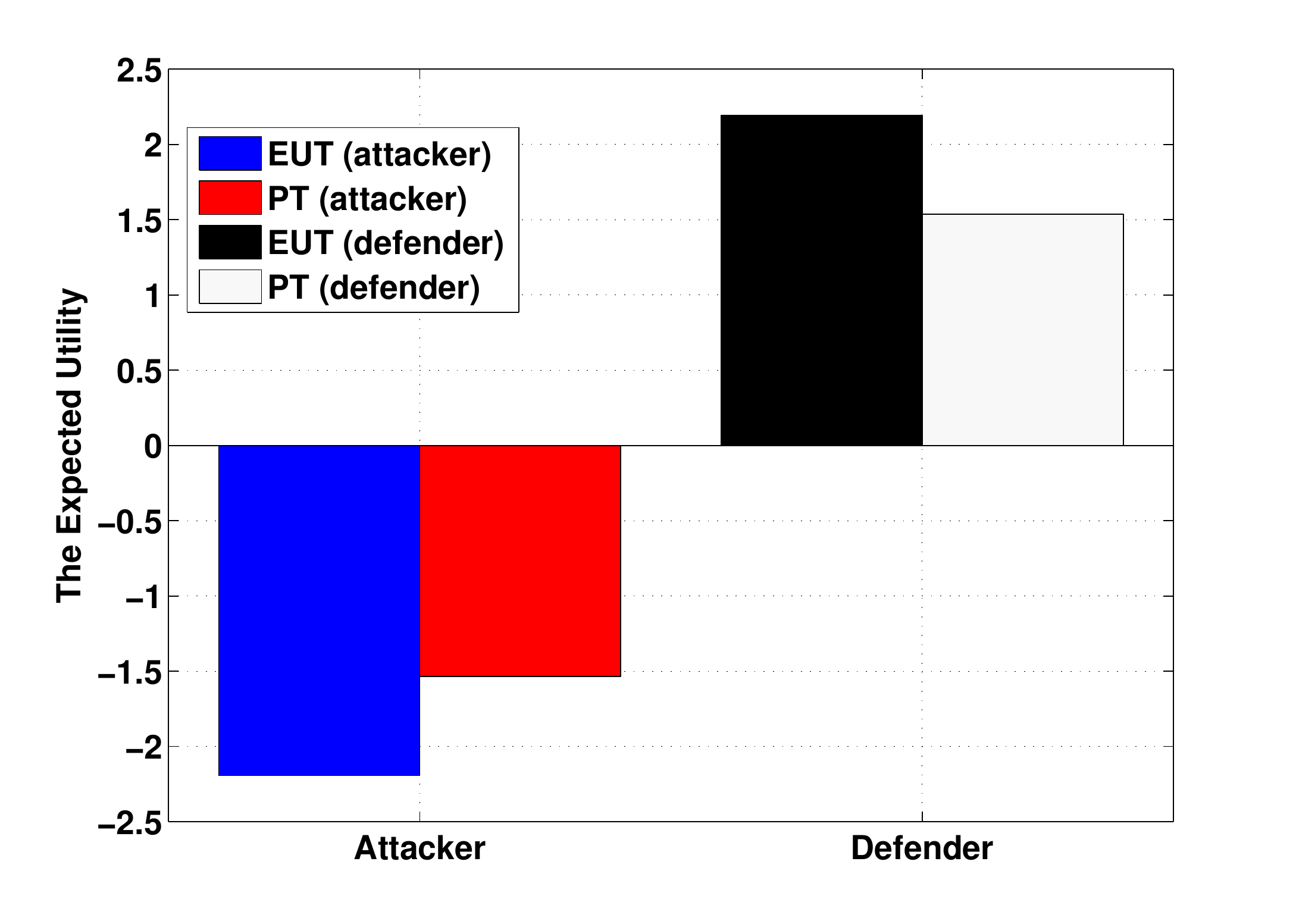}
 \vspace{-0.3cm}
   \caption{\label{fig:ut5} Expected utility at the equilibrium for the attacker and the defender under both EUT and PT with $\alpha_a=\alpha_d=0.5$.}
\end{center}\vspace{-0.3cm}
\end{figure}

In Fig.~\ref{fig:utVF}, we show the expected utility for both PT and EUT, as the value of the fine varies for $\alpha_a=\alpha_d=0.5$. The results in Fig.~\ref{fig:utVF} are used to highlight the impact of the value of the fine and corroborate some of the insights of the theorems in Subsection~\ref{sec:analy}. First, Fig. ~\ref{fig:utVF} shows the expected result that, as the fine value increases, the overall utility achieved by the defender increases while that of the attacker decreases, for both EUT and PT. In this figure, we can see that, under EUT and based on Theorem~\ref{th:FEUT}, the value of the fine for which neither the attacker nor the defender wins is $3$. In particular, at the crossing point, the attacker's MSNE is $\boldsymbol{p}_a^*=[0.3818\ 0.3022\ 0.2133\ 0.1027]$ and $F^v_{\textrm{EUT}}=3.0491$ as in (\ref{eq:FEUT}). For higher values, Fig.~\ref{fig:utVF} shows that under EUT the defender starts achieving a winning utility. More interestingly, we can see through Fig. ~\ref{fig:utVF} that, for PT, the value of the fine for which the utilities are 0 is $4$ which is greater than that of EUT. This implies that, under irrational behavior and uncertainty, the defender must set higher fines in order to start gaining over the attacker. Moreover, Fig.~\ref{fig:utVF} shows that, for this choice of $\alpha_a$ and $\alpha_d$, the defender is better off under EUT rather than PT. 
\begin{figure}[!t]
 \begin{center}
 \vspace{-0.3cm}
  \includegraphics[width=8cm]{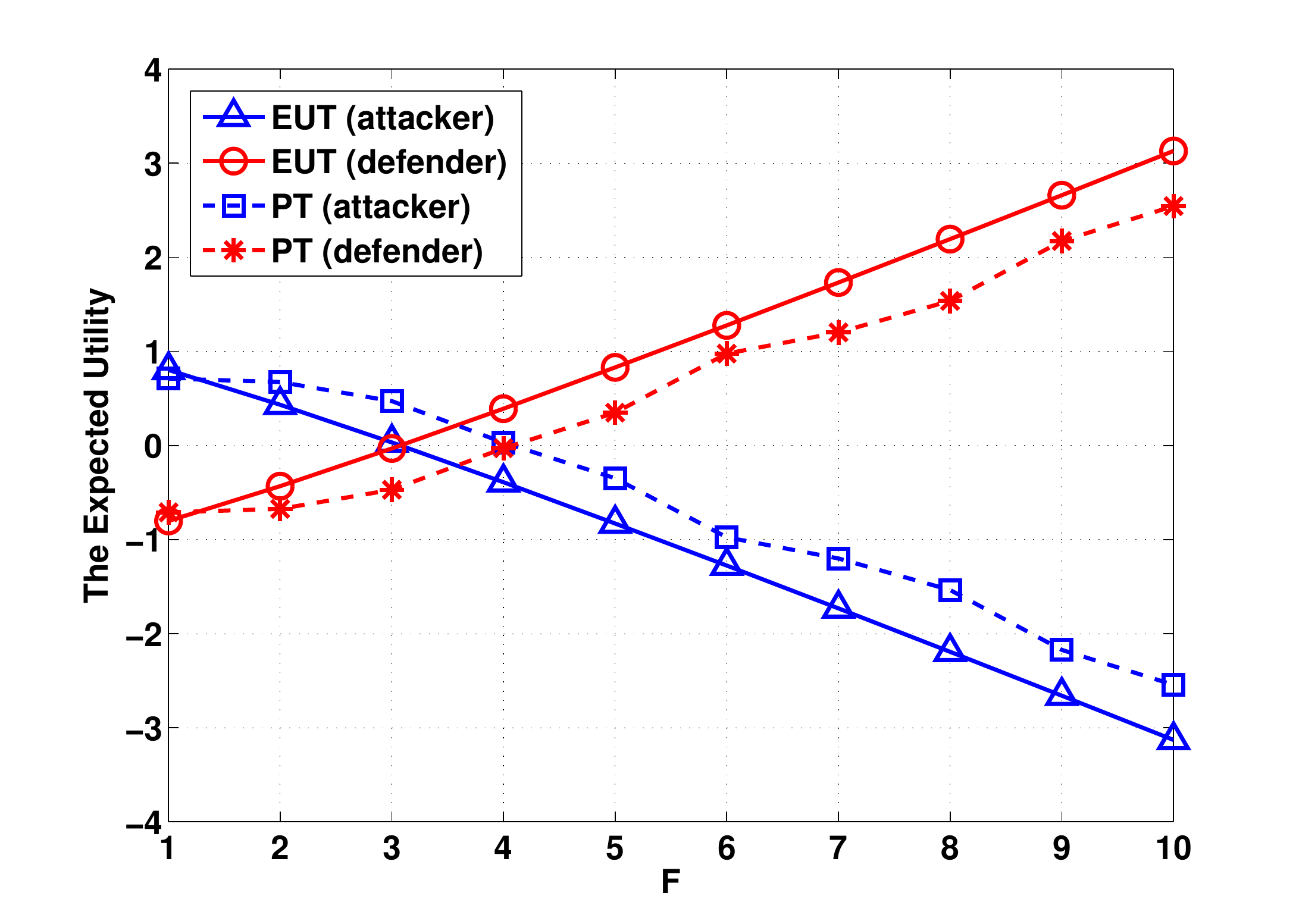}
 \vspace{-0.3cm}
   \caption{\label{fig:utVF} The utility performance as the value of the fine $F$ varies for both EUT and PT with $\alpha_a=\alpha_d=0.5$.}
\end{center}\vspace{-0.6cm}
\end{figure}

Figs.~\ref{fig:Fig6and9}-\ref{fig:ut15} show the equilibrium strategies and corresponding utilities for both EUT and PT, for two scenarios: i) Scenario 1 in which the attacker uses $\alpha_a=0.5$ while the defender is significantly deviating from the rational path, i.e., $\alpha_d=0.1$, and ii) Scenario 2 in which the defender uses $\alpha_d=0.5$ while the attacker is significantly deviating from the rational path, i.e., $\alpha_a=0.1$. 
In this respect, Fig.~\ref{fig:Fig6and9} and Fig.~\ref{fig:Fig7and10} show the mixed strategies of, respectively, the attacker and defender under the two scenarios. In addition, Fig.~\ref{fig:ut51} and Fig.~\ref{fig:ut15} show the expected utility achieved by, respectively, the attacker and defender under the two scenarios.

\begin{figure}[!t]
 \begin{center}
 \vspace{-0.3cm}
  \includegraphics[width=8cm]{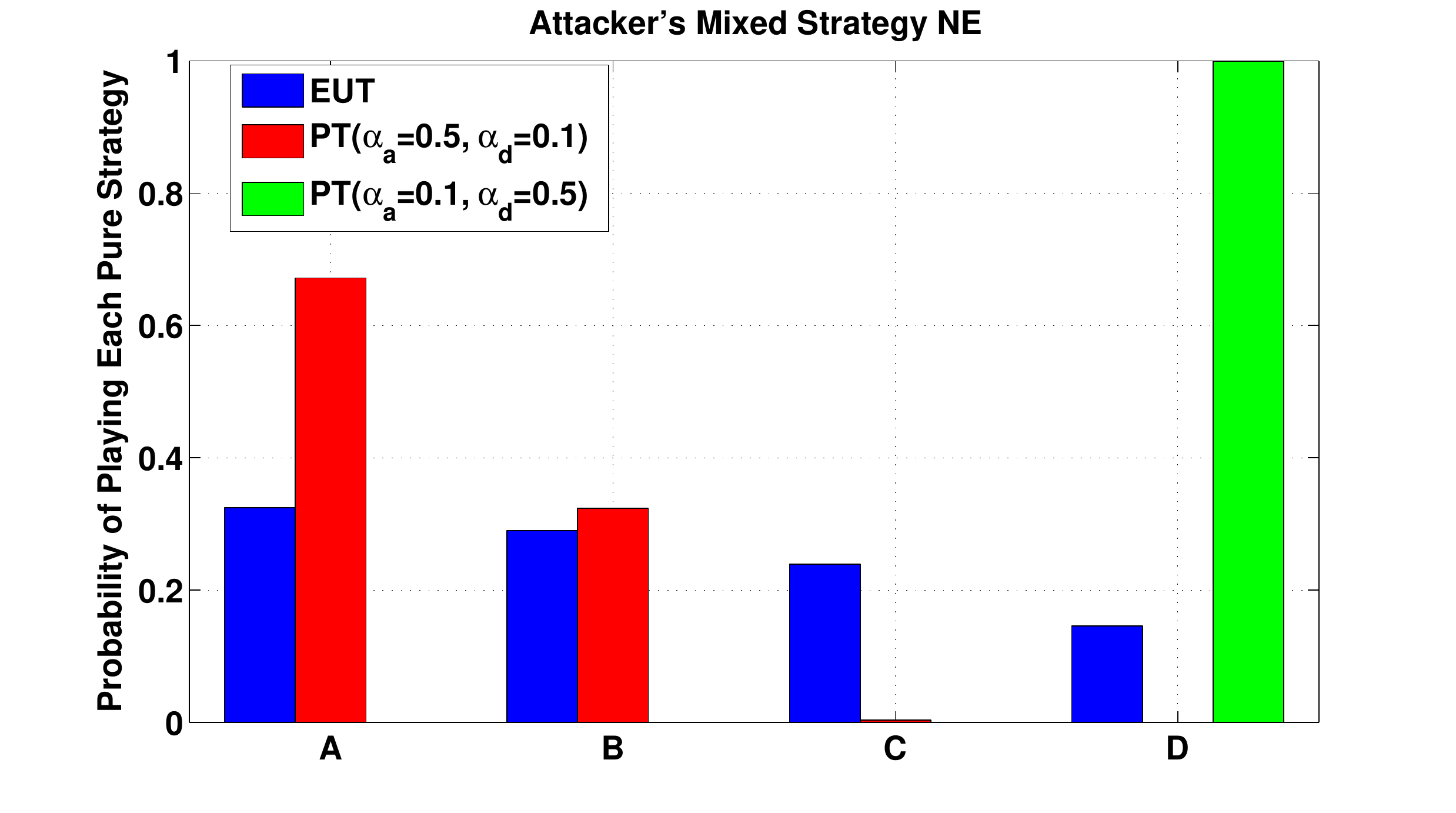}
 \vspace{-0.3cm}
   \caption{\label{fig:Fig6and9} Attacker mixed strategies at the equilibrium for both EUT and PT with i) Scenario 1: $\alpha_a=0.5$ and $\alpha_d=0.1$, and ii) Scenario 2: $\alpha_a=0.1$ and $\alpha_d=0.5$.}
\end{center}\vspace{-0.3cm}
\end{figure}
\begin{figure}[!t]
 \begin{center}
 \vspace{-0.3cm}
  \includegraphics[width=8cm]{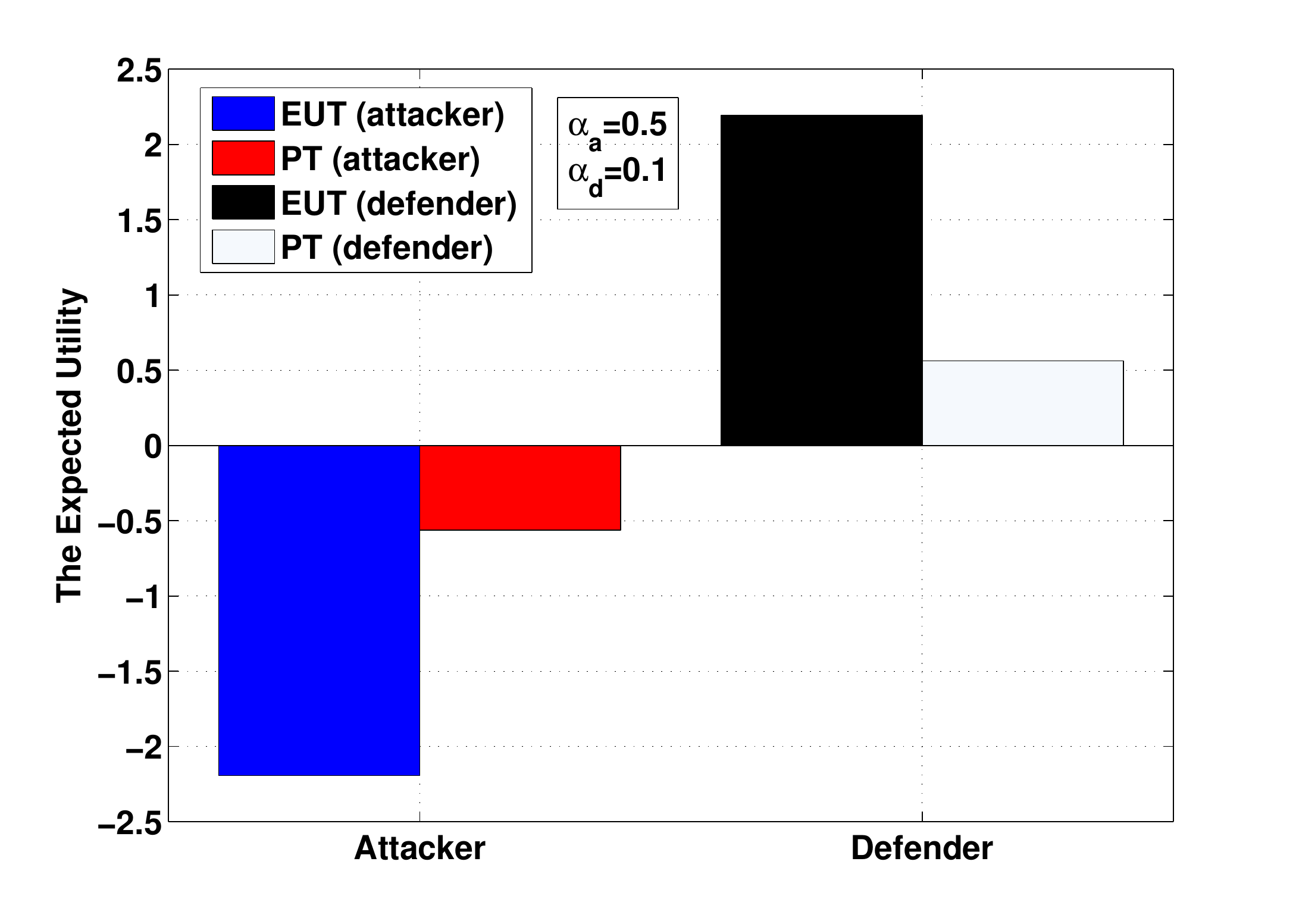}
 \vspace{-0.3cm}
   \caption{\label{fig:ut51} Expected utility at the equilibrium for the attacker and the defender under both EUT and PT with $\alpha_a=0.5$ and $\alpha_d=0.1$ (Scenario 1).}
\end{center}\vspace{-0.3cm}
\end{figure}
\begin{figure}[!t]
 \begin{center}
 \vspace{-0.3cm}
  \includegraphics[width=8cm]{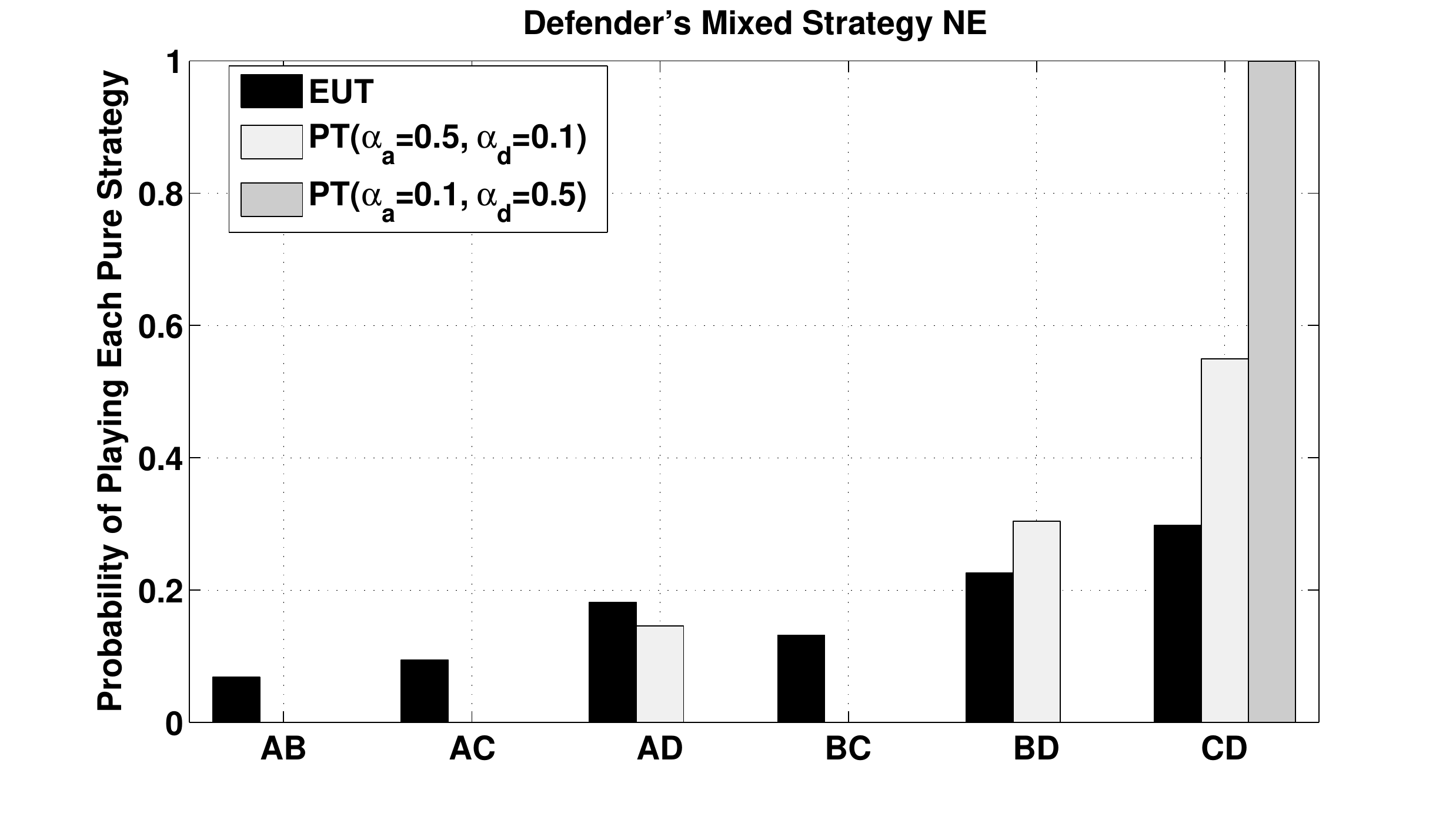}
 \vspace{-0.3cm}
   \caption{\label{fig:Fig7and10} Defender mixed strategies at the equilibrium for both EUT and PT with i) Scenario 1: $\alpha_a=0.5$ and $\alpha_d=0.1$, and ii) Scenario 2: $\alpha_a=0.1$ and $\alpha_d=0.5$.}
\end{center}\vspace{-0.6cm}
\end{figure}
\begin{figure}[!t]
 \begin{center}
 \vspace{-0.3cm}
  \includegraphics[width=8cm]{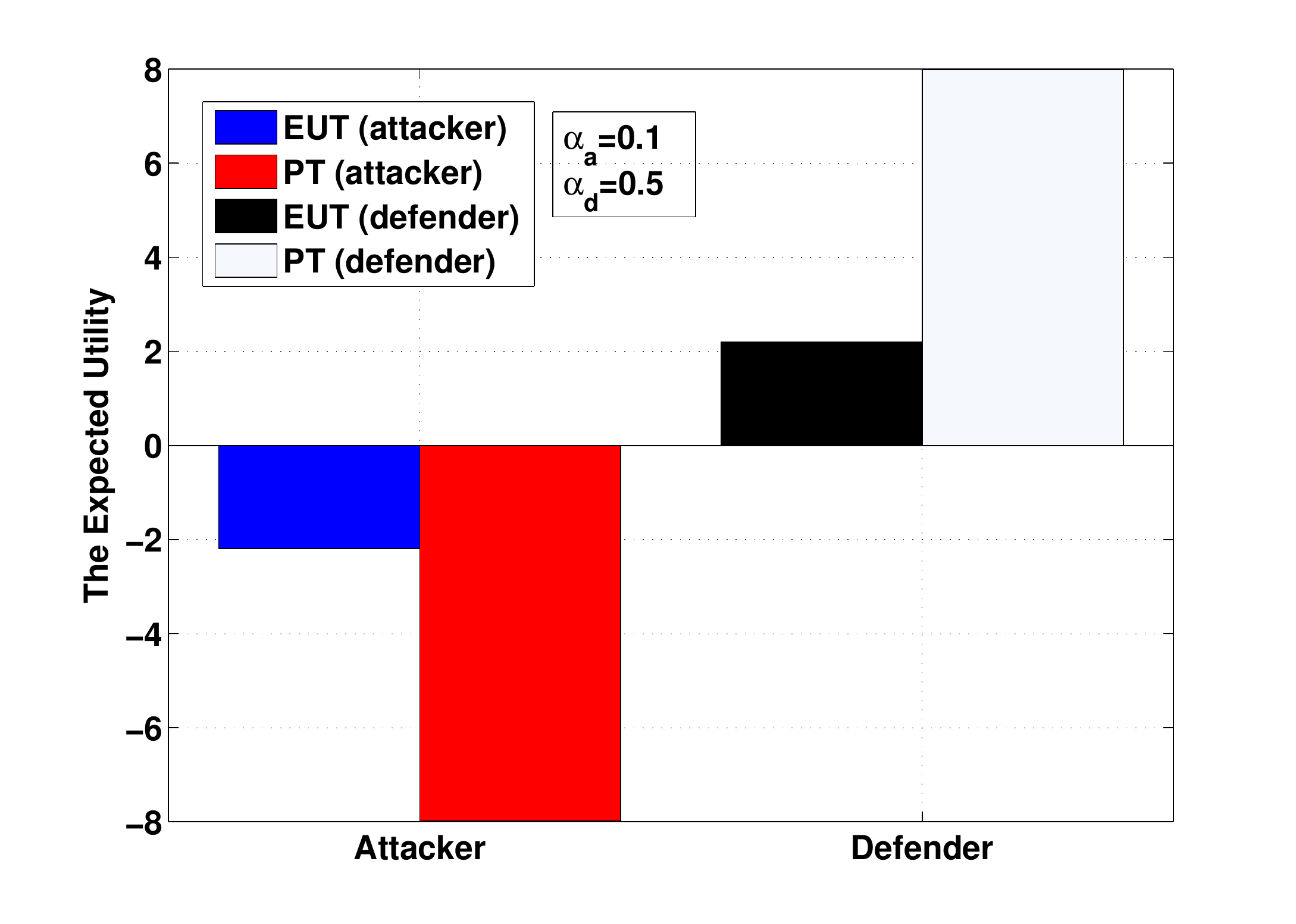}
 \vspace{-0.3cm}
   \caption{\label{fig:ut15} Expected utility at the equilibrium for the attacker and the defender under both EUT and PT with $\alpha_a=0.1$ and $\alpha_d=0.5$ (Scenario 2).}
\end{center}\vspace{-0.3cm}
\end{figure}
In the first scenario, the defender becomes extremely risk-averse and spends all of its resources for testing the combinations pertaining to the most lethal trojan as shown in Fig.~\ref{fig:Fig7and10}. This, in turn, will leave the defender at a disadvantage. Indeed, under such a significantly conservative defense strategy, the attacker finds it less risky to simply mix its attacks between the two trojans with lowest damage $A$ and $B$ as shown in Fig.~\ref{fig:Fig6and9}. By doing so, inadvertently, the attacker will benefit and will become more likely to emerge as a winner in the game. This is demonstrated by the results in Fig.~\ref{fig:ut51} where we can see that the value of the game decreases by about $76\%$.

In the second scenario, by being completely irrational about the perceived defense strategies, under PT, the attacker keeps attempting to insert the most damaging trojan $D$ as shown in Fig.~\ref{fig:Fig6and9}. In contrast, as the defender remains relatively risk averse for $\alpha_d=0.5$, it spends most of its effort to detect the trojans with most damage $CD$ as shown in Fig.~\ref{fig:Fig7and10}. By leveraging its ``rationality'' advantage, the defender can continuously detect the attacker's trojan and, thus, as seen in Fig.~\ref{fig:ut15}, the average value of the game is equal to $8$, which is the value of the fine.

In Fig.~\ref{fig:proValp}, we study the case in which both attacker and defender have an equal rationality parameter, i.e., $\alpha_a=\alpha_d=\alpha$. In this figure, we show the equilibrium mixed-strategy probability for the most damaging strategy $D$ for the attacker and the most defensive strategy $CD$, for the defender. Fig.~\ref{fig:proValp} shows very interesting insights on the trojan hardware detection game. First, for games in which both the defender and the attacker significantly deviate from the rational path ($\alpha < 0.3$), the outcome of the game leads to both players using their most conservative strategies, with probability $1$. This directly implies that, for this highly irrational case, the defender will always emerge as a winner. In contrast, for the regime $0.3 \le \alpha \le 0.7$, under which both the attacker and the defender are not completely rational (but have equal rationality level), the attacker becomes less likely to use its most damaging strategy $D$, as compared to the fully rational case, while the defender becomes more likely to use its most protective strategy $CD$, in comparison to the fully rational case. This naturally translates in an advantage for the attacker (compared to the EUT case) as seen previously in Figs.~\ref{fig:ut5} and \ref{fig:ut51} as well as in Fig.~\ref{fig:utValp}.

\begin{figure}[!t]
 \begin{center}
 \vspace{-0.3cm}
  \includegraphics[width=8cm]{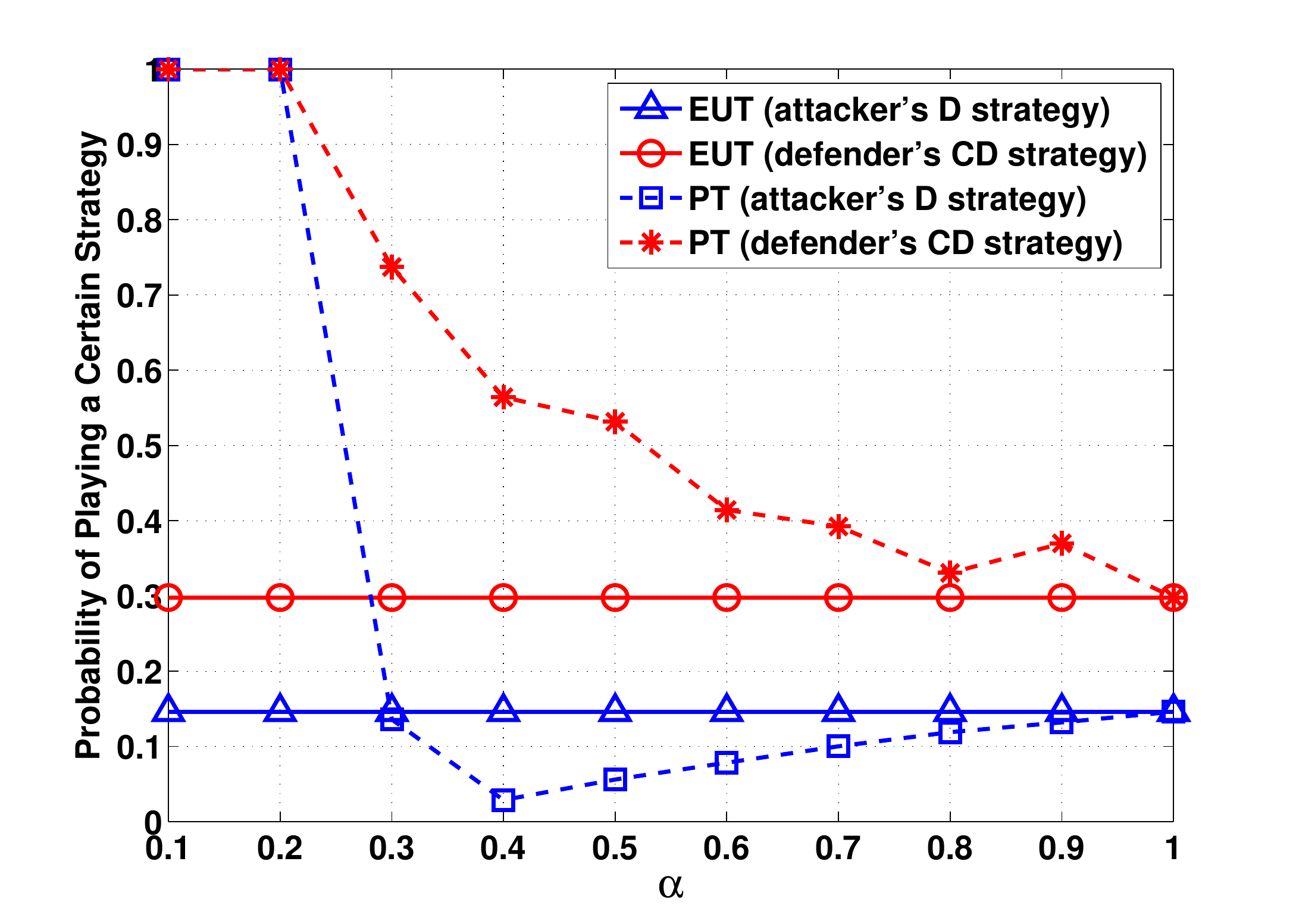}
 \vspace{-0.3cm}
   \caption{\label{fig:proValp} Equilibrium mixed strategies under PT and EUT for the most conservative defender and attacker options as the rationality of both players $\alpha_a=\alpha_d=\alpha$ varies.}
\end{center}\vspace{-0.6cm}
\end{figure}
\begin{figure}[!t]
 \begin{center}
 \vspace{-0.3cm}
  \includegraphics[width=8cm]{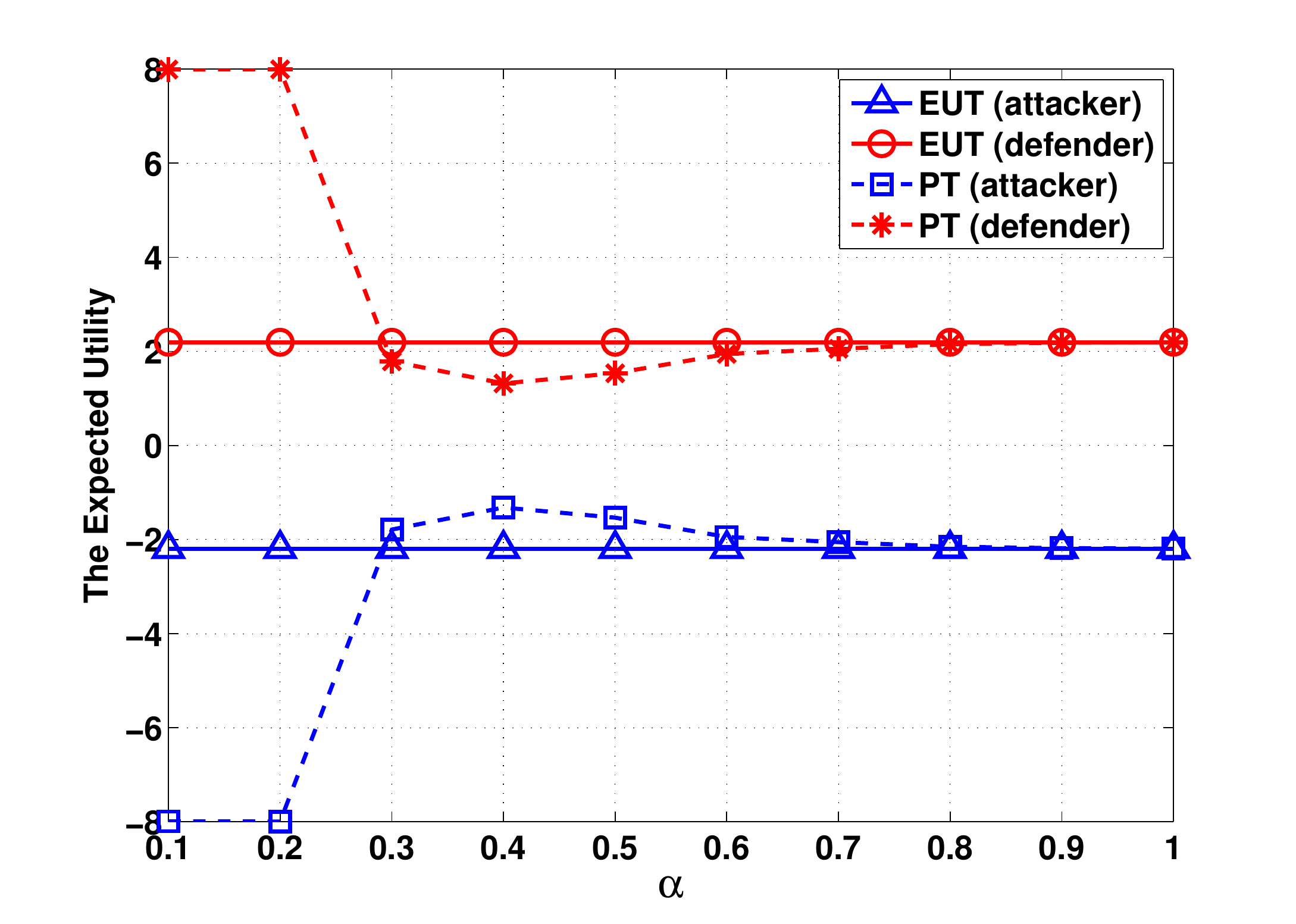}
 \vspace{-0.3cm}
   \caption{\label{fig:utValp} Expected utility at the equilibrium under PT and EUT for the most conservative defender and attacker options as the rationality of both players $\alpha_a=\alpha_d=\alpha$ varies.}
\end{center}\vspace{-0.3cm}
\end{figure}

Fig.~\ref{fig:utValp-atk} shows the expected utility at both the PT and EUT equilibria for a scenario in which the defender is completely rational ($\alpha_d=1$) while the attacker has a varying rationality parameter. Fig.~\ref{fig:utValp-atk} shows that, under a completely rational defense strategy, the EUT performance will upper bound the attacker's performance. In other words, the attacker cannot do better than by behaving somewhat in line with the rational path, as the two utilities coincide for $\alpha_a > 0.3$. Moreover, under a perfectly rational defense strategy, the attacker will immediately be detected if it deviates significantly from the EUT behavior, as evidenced in Fig.~\ref{fig:utValp-atk}  by the expected utility achieved for $\alpha < 0.3$.

In Fig.~\ref{fig:utValp-dfd}, we consider the case in which the attacker is completely rational $\alpha_a=1$ while the defender has a varying rationality level. Fig.~\ref{fig:utValp-dfd} shows that, as the rationality of the defender increases, its defense mechanism performs better. Indeed, by avoiding extremely conservative and irrational perceptions of the attack strategy, i.e., for $\alpha_d \ge 0.4$, the defender can maintain the performance of the system within the bounds of the fully rational EUT behavior even if its own rationality is below that of the attacker. In contrast, for $\alpha_d < 0.4$, the fully rational attacker will be able to exploit its rationality advantage and will thus have better chances of damaging the system. This damage increases with decreasing $\alpha_d$. The worst-case system operation occurs when the defender has a rationality parameter of $\alpha_d \le 0.2$.\vspace{-0.1cm}
\begin{figure}[!t]
 \begin{center}
 \vspace{-0.3cm}
  \includegraphics[width=8cm]{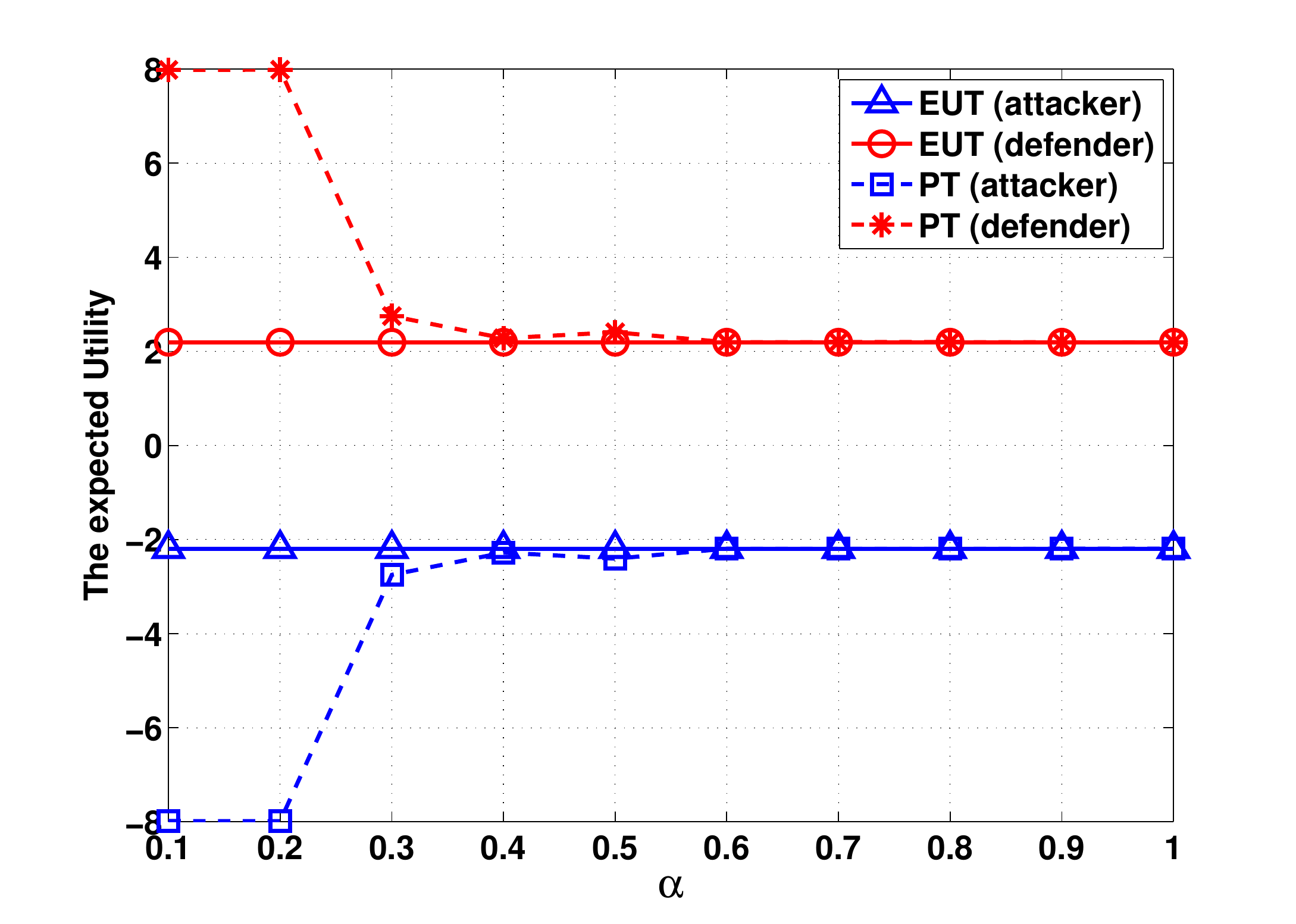}
 \vspace{-0.3cm}
   \caption{\label{fig:utValp-atk} The expected utility at the equilibrium as the rationality of the attacker varies, under a completely rational defender with $\alpha_d=1$.}
\end{center}\vspace{-0.6cm}
\end{figure}
\begin{figure}[!t]
 \begin{center}
 \vspace{-0.3cm}
  \includegraphics[width=8cm]{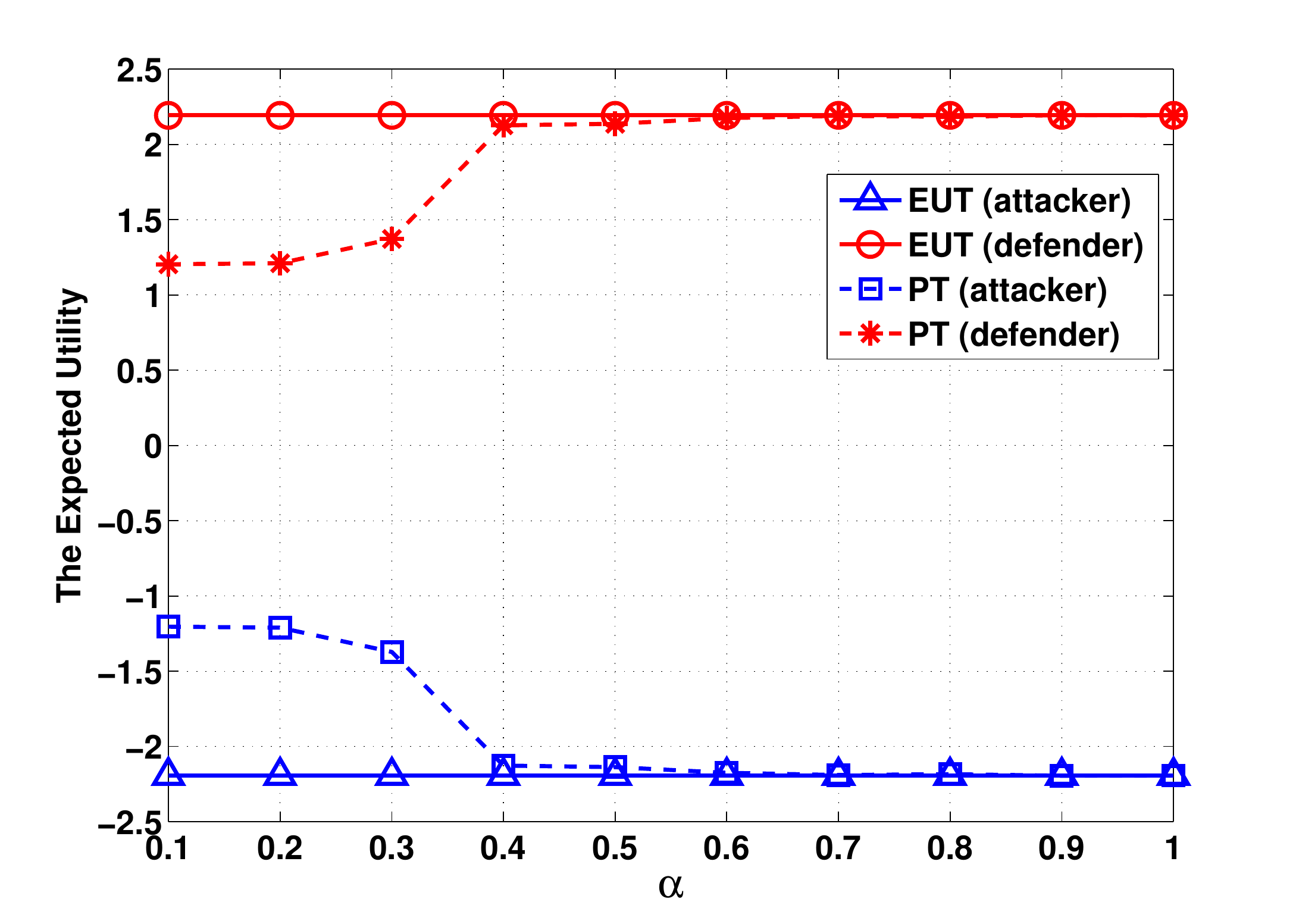}
 \vspace{-0.3cm}
   \caption{\label{fig:utValp-dfd} The expected utility at the equilibrium as the rationality of the defender varies, under a completely rational attacker with $\alpha_a=1$.}
\end{center}\vspace{-0.3cm}
\end{figure}

Here, we note that the Nash equilibrium strategies for the attacker and the defender under both EUT and PT which have been introduced and analyzed in this section were obtained using the proposed solution algorithm in Table~\ref{alg:alg1}. In all of the studied cases, the algorithm has successfully converged in a relatively short period of time.
In fact, by inspecting the algorithm in Table 1, one can see that the most computationally demanding operation of the algorithm is Step 5 in which each player chooses the action that maximizes its payoff given the perceived empirical frequencies of the actions of the opponent. Given that the action space of each of the players is discrete, this consists of searching over all the elements of each player's action space. This search, however, requires very low computational complexity which grows linearly with the size of the action spaces of each player. As a result, computing Step 5 at each iteration requires a very short amount of time. All other needed computations in Table~\ref{alg:alg1}, steps 6-10, are simple algebraic computations requiring a very short execution time. Hence, the execution time of each iteration of the algorithm, and as a result its total convergence time, is practically very short. 

For instance, Figs.~\ref{fig:AttEUT}--\ref{fig:DefPT} show the convergence of the strategies of the attacker and defender to the NE under EUT and PT, respectively, for the case treated in Figs.~\ref{fig:atk5}-\ref{fig:ut5} with $\alpha_a=\alpha_d=0.5$. In these set of simulations, the stoppage criterion as defined in~(\ref{eq:ConvergenceIt}) and in Step 8 of Table~\ref{alg:alg1}, is chosen such that $\frac{1}{M}=0.001$. In other words, the algorithm is considered to have converged when the change in the updated empirical frequencies of all the actions of both players is less than 0.001.  
\begin{figure}[!t]
 \begin{center}
 \vspace{-0.3cm}
  \includegraphics[width=8cm]{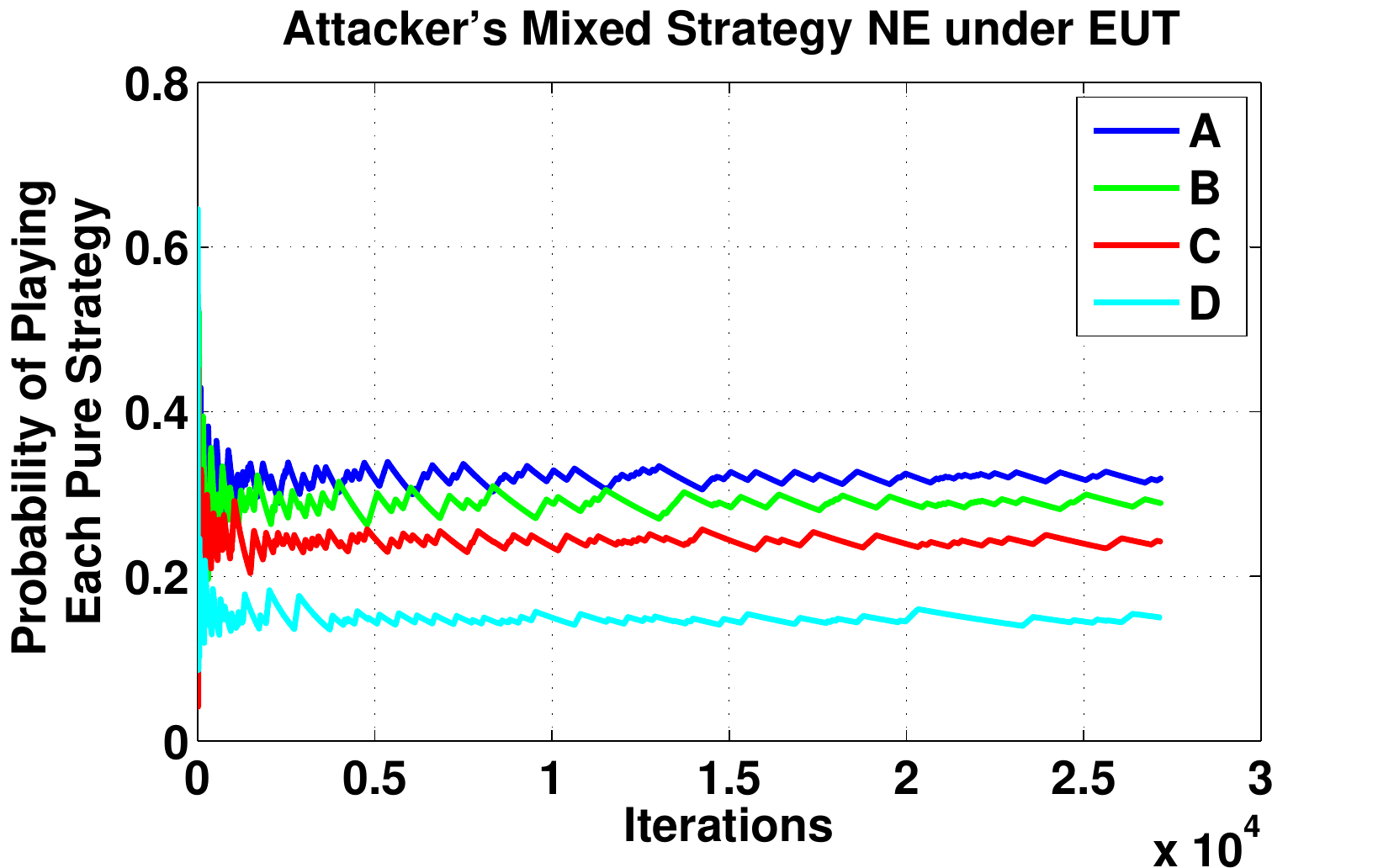}
   \caption{\label{fig:AttEUT} Convergence of the attacker's NE strategies under EUT.}
\end{center}\vspace{-0.6cm}
\end{figure}
\begin{figure}[!t]
 \begin{center}
 \vspace{-0.3cm}
  \includegraphics[width=8cm]{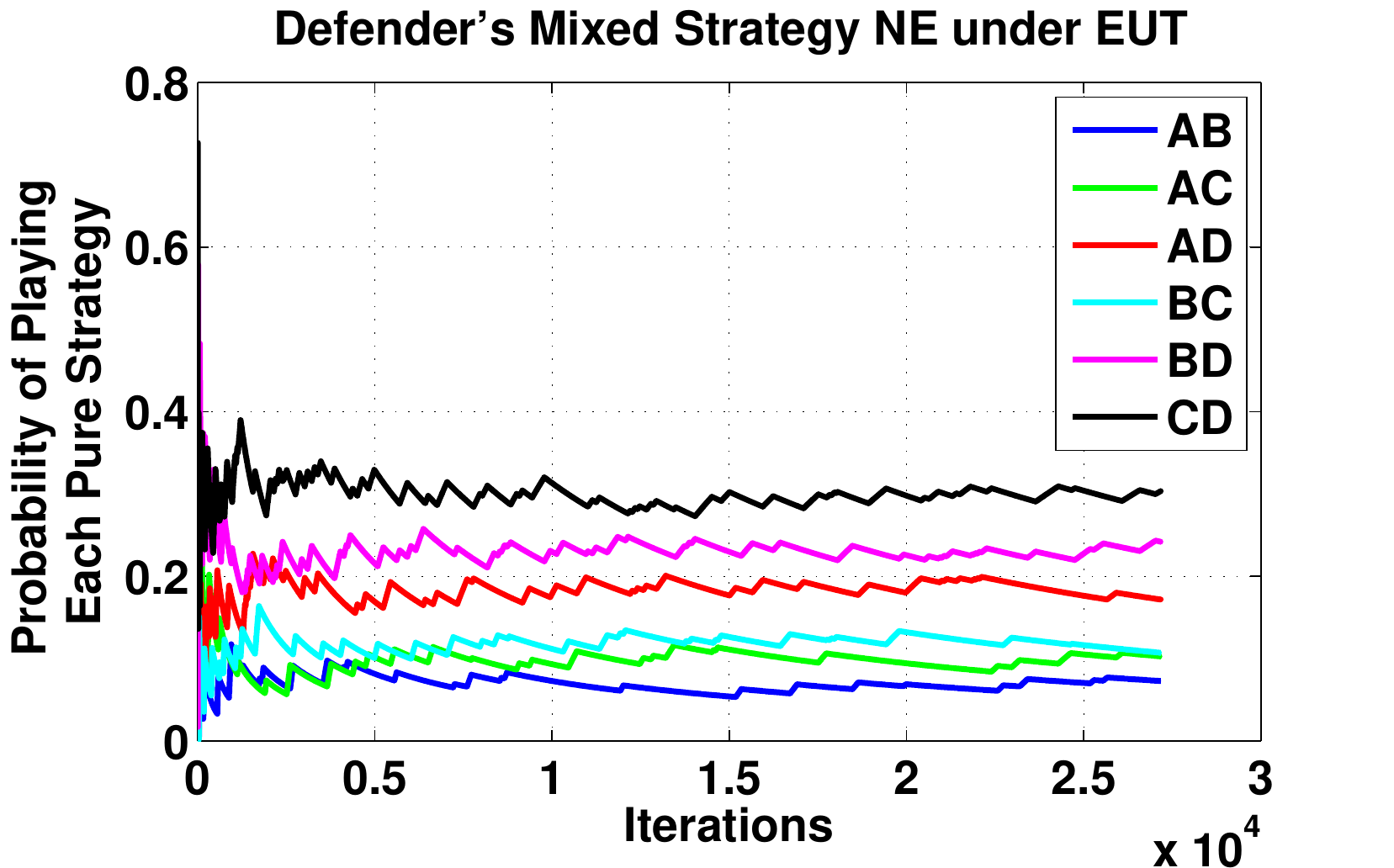}
   \caption{\label{fig:DefEUT} Convergence of the defender's NE strategies under EUT.}
\end{center}\vspace{-0.3cm}
\end{figure} 
Fig.~\ref{fig:AttEUT} and Fig.~\ref{fig:DefEUT} show, respectively, the convergence of the attacker's and defender's mixed strategies under EUT while Fig.~\ref{fig:AttPT} and Fig.~\ref{fig:DefPT} show, respectively, the convergence of the attacker's and defender's mixed strategies under PT. Here, we note that even though the number of iterations needed for the attacker's and defender's strategies to converge is relatively high, it only took the algorithm 26.1 seconds to converge in the case of EUT (i.e. Fig.~\ref{fig:AttEUT} and~\ref{fig:DefEUT}) and 26.4 seconds to converge in case of PT (i.e. Fig.~\ref{fig:AttPT} and Fig~\ref{fig:DefPT}) using a 2.53 GHz processor and 3GB RAM computer. Here, we note that the convergence required a large number of iterations due to two main reasons: i) the very small convergence criterion $\frac{1}{M}$ that we have chosen, and ii) the decreasing influence of each iteration when the number of iterations grows large. In fact, as can be seen from Step 6 in Table~\ref{alg:alg1}, as the number of iterations $k$ increases, the effect of each iteration on updating the empirical frequency decreases. As such, for a very small convergence criterion, it would require the algorithm a large number of iterations to converge. However, since the computational requirement of the algorithm is very low, the execution of each iteration takes a very short time. Hence, as can be seen from our generated results, even though the algorithm required a large number of iterations, the total convergence time is kept practically small. Moreover, an operator can increase $\frac{1}{M}$, if needed, in order to have a faster convergence, at the expense of reaching an approximate rather than exact equilibrium point.

\begin{figure}[!t]
 \begin{center}
 \vspace{-0.3cm}
  \includegraphics[width=8cm]{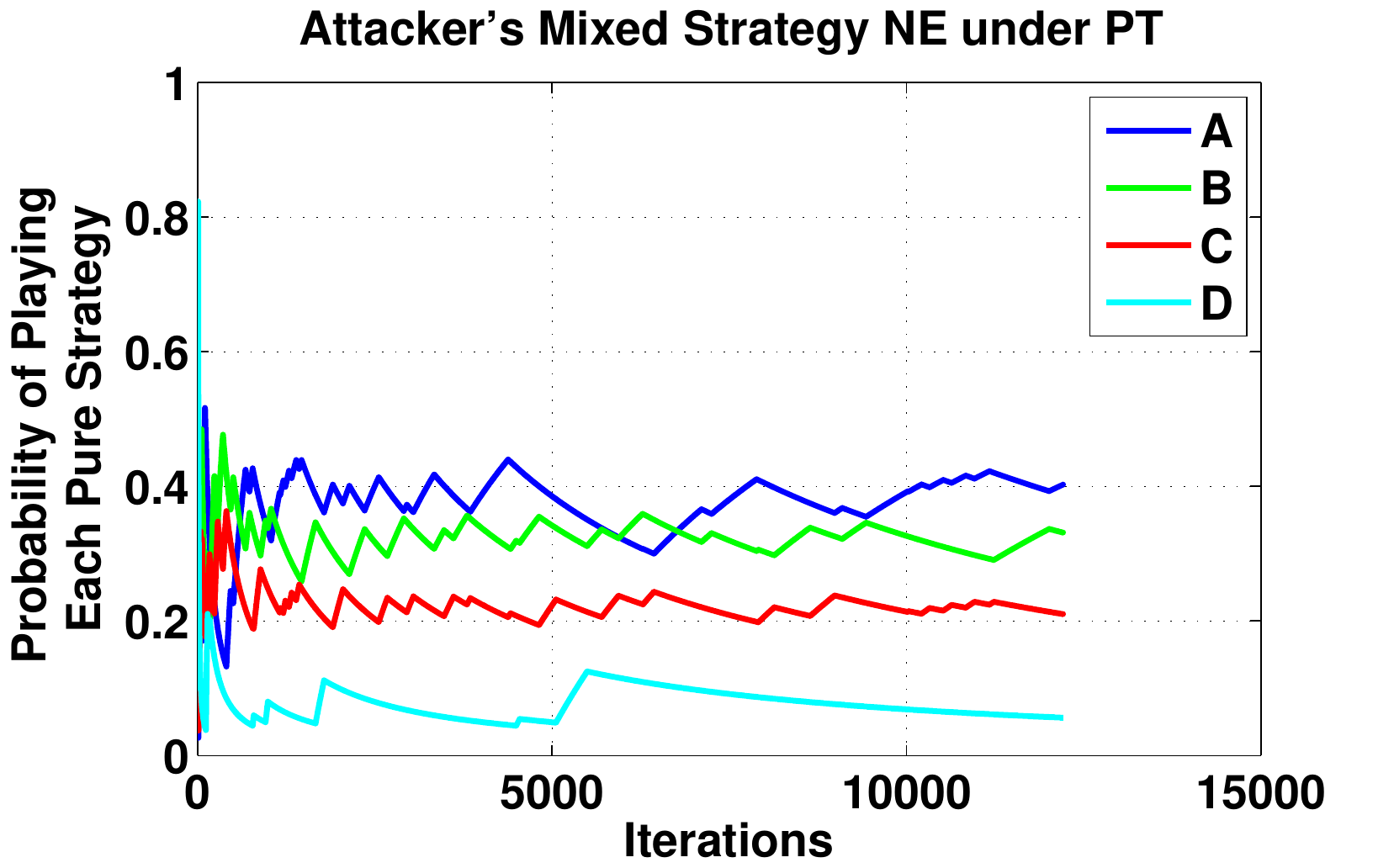}
   \caption{\label{fig:AttPT} Convergence of the attacker's NE strategies under PT.}
\end{center}\vspace{-0.3cm}
\end{figure}
\begin{figure}[!t]
 \begin{center}
 \vspace{-0.3cm}
  \includegraphics[width=8cm]{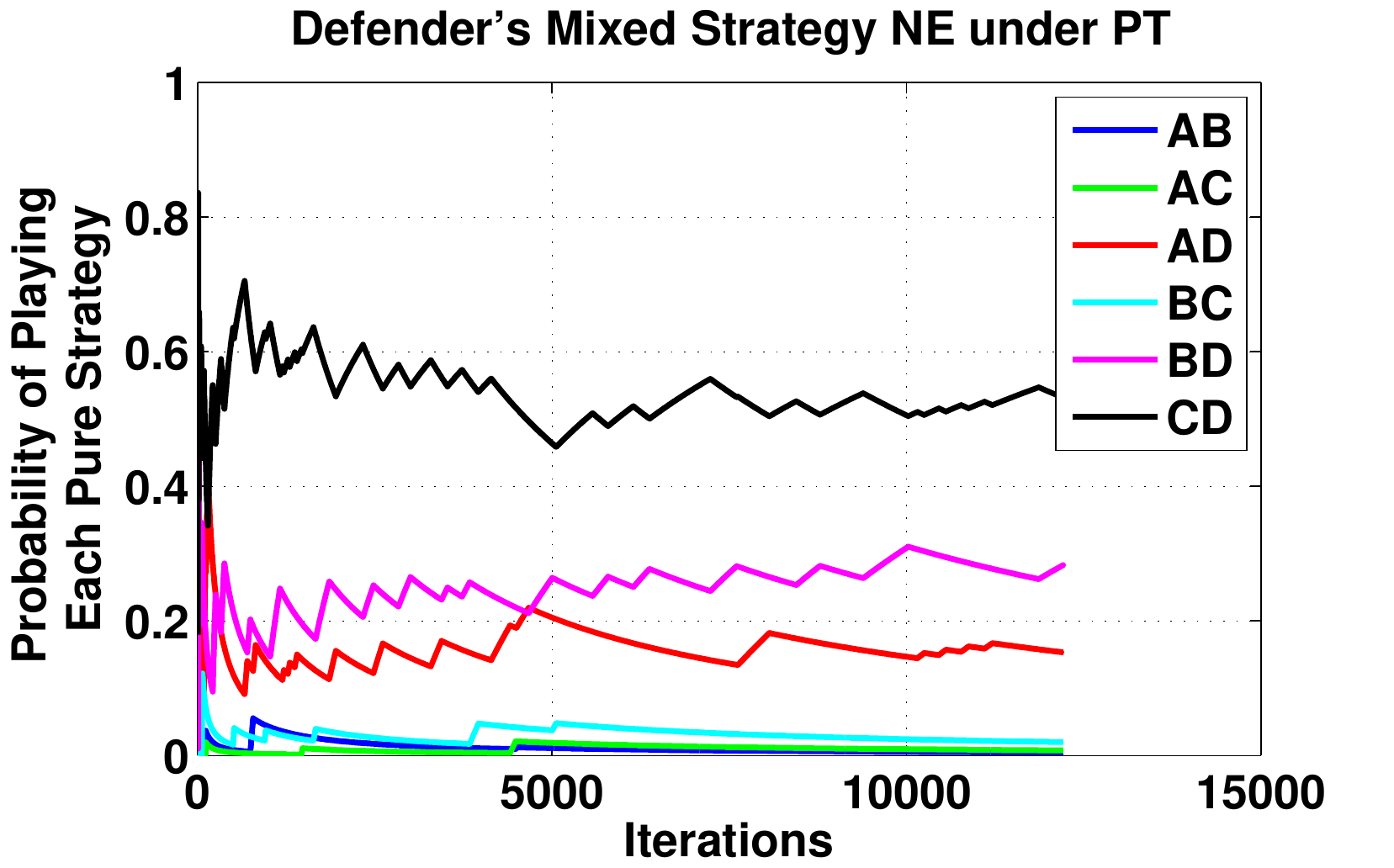}
   \caption{\label{fig:DefPT} Convergence of the defender's NE strategies under PT.}
\end{center}\vspace{-0.6cm}
\end{figure}

\section{Conclusions}\label{sec:conc}
In this paper, we have proposed a novel game-theoretic approach for modeling the interactions between hardware manufacturers, who can act as attackers by inserting hardware trojans, and companies or agencies, that act as defenders that test the circuits for hardware torjans. We have formulated the problem as a noncooperative game between the attacker and the defender, in which the attacker chooses the optimal trojan type to insert while the defender chooses the best testing strategy, from a set of trojan types. To account for the uncertainty and risk in the decision making processes, we have proposed a novel framework, based on the emerging tools of prospect theory, for analyzing the proposed game. To solve the game for both conventional game theory and for prospect theory, we have proposed a fictitious play-based algorithm and shown its guaranteed convergence to an equilibrium point. Thorough analytical and simulation results have been derived to assess the outcomes of the proposed games. Our results have shown that the use of prospect-theoretic considerations can provide insightful information on how irrational behavior, uncertainty, and risk can impact the interactions between an attacker and defender in a hardware trojan detection game. 
\bibliographystyle{IEEEtran}
\bibliography{references}
\appendix[Proof of Theorem~\ref{th:FEUT}]
Starting first with the EUT case, since for the case in which no player is a winner the expected utility of the attacker is equal to that of the defender and the MSNE strategies of the attacker are unique, we solve for $F$ from the perspective of the attacker's MSNE. In particular,\\
\begin{align}
\begin{cases}
&U_a^{\text{EUT}}(\boldsymbol{p}_d^{*\text{EUT}},\boldsymbol{p}_a^{*\text{EUT}})={\boldsymbol{p}_a^{*\text{EUT}}}'\cdot  \boldsymbol{M}_d \cdot \boldsymbol{p}_d^{*\text{EUT}}=0,\\
&U_d^{\text{EUT}}(\boldsymbol{p}_d^{*\text{EUT}},\boldsymbol{p}_a^{*\text{EUT}})={\boldsymbol{p}_d^{*\text{EUT}}}'\cdot  \boldsymbol{M}_a \cdot \boldsymbol{p}_a^{*\text{EUT}}=0,\\
\end{cases}
\end{align}
\\
where ${\boldsymbol{p}_a^{*\text{EUT}}}'$ is the transpose of $\boldsymbol{p}_a^{*\text{EUT}}$. Here, the expected utility of the defender requires one to first compute the MSNE of the attacker using $\boldsymbol{M}_a$. Based on the indifference principle, at the defender's MSNE, we have $U_d(AB, \boldsymbol{p}_a^*)=U_d(AC, \boldsymbol{p}_a^*)=\cdots=U_d(CD, \boldsymbol{p}_a^*)$. Moreover, we have:\\
\begin{align}
[U_d(AB, \boldsymbol{p}_a^*)\  U_d(AC, \boldsymbol{p}_a^*)\   \cdots  U_d(CD, \boldsymbol{p}_a^*)]^T=\boldsymbol{M}_a\cdot \boldsymbol{p}_a^{*\text{EUT}}
\end{align}\\\indent
Because the mixed strategy of the defender is nonnegative, i.e. $\boldsymbol{p}_d^{*\text{EUT}} \ge 0$, we have
\begin{align}
\begin{split}
{\boldsymbol{p}_d^{*\text{EUT}}}'\cdot \boldsymbol{U}_d(\boldsymbol{s}_d,\boldsymbol{p}_a^*)=&0\\
\boldsymbol{U}_d(\boldsymbol{s}_d,\boldsymbol{p}_a^*)=&0\\
\therefore \quad \boldsymbol{M}_a\cdot \boldsymbol{p}_a^{*\text{EUT}}=&\boldsymbol{0}.
\end{split}
\end{align}\\\indent
In particular, for $U_d(AB,\boldsymbol{p}_a^*)$,\\
\begin{align}\label{eq:FEUT}
\begin{split}
&Fp_a^*(A)+Fp_a^*(B)-4p_a^*(C)-12p_a^*(D)=0,\\
&F^v_{\textrm{EUT}}=\frac{4p_a^*(C)+12p_a^*(D)}{p_a^*(A)+p_a^*(B)}.
\end{split}
\end{align}\vspace{0.1cm}

For the case of PT, similarly to the case of EUT, we have\\
\begin{align}
\begin{cases}
&U_a^{\text{PT}}(\boldsymbol{p}_d^{*\text{PT}},\boldsymbol{p}_a^{*\text{PT}})={\boldsymbol{p}_a^{*\text{PT}}}'\cdot  \boldsymbol{M}_d\cdot \boldsymbol{p}_d^{*\text{PT}}=0,\\
&U_d^{\text{PT}}(\boldsymbol{p}_d^{*\text{PT}},\boldsymbol{p}_a^{*\text{PT}})={\boldsymbol{p}_d^{*\text{PT}}}'\cdot  \boldsymbol{M}_a\cdot \boldsymbol{p}_a^{*\text{PT}}=0.\\
\end{cases}
\end{align}\\\indent
Although, at the the mixed NE the indifference principle holds, $\boldsymbol{M}_a \cdot \boldsymbol{p}_a^{*\text{PT}} \neq \boldsymbol{0}$ due to the nonlinear weighting effect. Thus,
\begin{align}
F^v_{\textrm{PT}}=\frac{{\boldsymbol{p}_a^{*\text{PT}}}'\cdot
\begin{bmatrix}
0&0&0& 1& 1&1\\
0&2 &2 &0&0&2\\
4 &0&4 &0&4&0\\
12&12&0&12&0&0\\
\end{bmatrix}
\cdot \boldsymbol{p}_d^{*\text{PT}}}{{\boldsymbol{p}_a^{*\text{PT}}}'\cdot
\begin{bmatrix}
1&1&1& 0& 0&0\\
1&0 &0 &1&1&0\\
0 &1&0 &1&0&1\\
0&0&1&0&1&1\\
\end{bmatrix}
\cdot \boldsymbol{p}_d^{*\text{PT}}}
\end{align}\\\indent
Since the denominator is not $0$, then $F^v_{\textrm{PT}}$ can be computed.

\end{document}